\theoremstyle{definition}
\newtheorem{thm}{Theorem}[section]
\newtheorem{rem}{Remark}[section]
\newtheorem{cor}{Corollary}[thm]
\theoremstyle{definition}
\newtheorem{defi}{Definition}[section]
\def\ps@headings{% default to standard twoside headers, no footers
% will change later if the mode requires otherwise
\let\@oddhead\@empty
\let\@evenhead\@empty
\def\@oddfoot{\@IEEEheaderstyle\hfil\thepage}%
\def\@evenfoot{\@IEEEheaderstyle\thepage\hfil\hbox{}}% not used, oneside
}
\def\ps@IEEEtitlepagestyle{% default title page headers, no footers
\let\@oddhead\@empty
\let\@evenhead\@empty
\def\@oddfoot{\mycopyrightnotice}%
\let\@evenfoot\@empty
}
\def\mycopyrightnotice{
  {\footnotesize
  \begin{minipage}{\textwidth}
Accepted by INFOCOM \copyright2022 IEEE. Personal use of this material is permitted. Permission from IEEE must be obtained for all other uses, in any current or future media, including reprinting/republishing this material for advertising or promotional purposes, creating new collective works, for resale or redistribution to servers or lists, or reuse of any copyrighted component of this work in other works.
  \end{minipage}
  }
}
\begin{document}
\title{Multi-Agent Distributed Reinforcement Learning for Making Decentralized Offloading Decisions}

\author{\IEEEauthorblockN{Jing Tan}
\IEEEauthorblockA{\textit{Huawei Technology} \\
Munich, Germany \\
jingtan@huawei.com}
\and
\IEEEauthorblockN{Ramin Khalili}
\IEEEauthorblockA{\textit{Huawei Technology} \\
Munich, Germany \\
ramin.khalili@huawei.com}
\and
\IEEEauthorblockN{Holger Karl}
\IEEEauthorblockA{\textit{Hasso Plattner Institute} \\
Berlin, Germany \\
holger.karl@hpi.de}
\and
\IEEEauthorblockN{Artur Hecker}
\IEEEauthorblockA{\textit{Huawei Technology} \\
Munich, Germany \\
artur.hecker@huawei.com}
}

\maketitle

\begin{abstract}
We formulate computation offloading as a decentralized decision-making problem with autonomous agents. We design an interaction mechanism that incentivizes agents to align private and system goals by balancing between competition and cooperation. The mechanism provably has Nash equilibria with optimal resource allocation in the static case. For a dynamic environment, we propose a novel multi-agent online learning algorithm that learns with partial, delayed and noisy state information, and a reward signal that reduces information need to a great extent. Empirical results confirm that through learning, agents significantly improve both system and individual performance, e.g., $40\%$ offloading failure rate reduction, $32\%$ communication overhead reduction, up to $38\%$ computation resource savings in low contention, $18\%$ utilization increase with reduced load variation in high contention, and improvement in fairness. Results also confirm the algorithm's good convergence and generalization property in significantly different environments.
%We propose a novel multi-agent online learning algorithm for decentralized computation offloading decision-making in vehicular network, with partial, delayed and noisy state information. We design an interaction mechanism based on auction, which incentivizes both competition and cooperation, and provably has Nash equilibria with optimal resource allocation. Empirical result confirms that through learning, the system achieves $20$-$38\%$ savings on computation resource in low contention, up to $10\%$ increase in offloading success rate, up to $18\%$ resource utilization increase and on average $9\%$ less load variation in high contention, $32\%$ communication overhead reduction, as well as improvement in fairness. The learned models are easily generalizable to other settings.
\end{abstract}
\begin{IEEEkeywords}
Offloading, Distributed Systems, Reinforcement Learning, Decentralized Decision-Making
\end{IEEEkeywords}

\section{Introduction}
\label{sec:intro}

Vehicular network (V2X) applications are characterized by huge number of users, dynamic nature, and diverse Quality of Service (QoS) requirements \cite{masmoudi2019survey}. They are also computation-intensive, e.g., inferring from large neural networks \cite{hofmarcher2019visual} or solving non-convex optimization problems \cite{claussmann2019review}\cite{badue2020self}. These applications currently reside in the vehicle's onboard units (OBU) for short latency and low communication overhead. Even with companies such as NVidia developing OBUs with high computation power \cite{oh2019hardware}, post-production OBU upgrades are typically not commercially viable; and irrespective of local OBU power, the ability to offload tasks to edge/cloud via multi-access edge computing (MEC) devices increases flexbility, protecting vehicles against IT obsolescence. Hence, offloading is a key technique for future V2X scenarios \cite{europe6g,you2021towards,5gaausecase1,5gaausecase2}. 

%With the proliferation of multi-access edge computing (MEC) devices, research on computation offloading to edge/cloud is gaining importance, especially in vehicular network (V2X) \cite{europeaiml,europe6g,europehorizon,you2021towards,5gaausecase1,5gaausecase2}. V2X applications are characterized by huge number of users, dynamic nature, and diverse Quality of Service (QoS) requirements \cite{masmoudi2019survey}. They are also computation-intensive, e.g., inferring from large neural networks \cite{hofmarcher2019visual} or solving non-convex optimization problems \cite{claussmann2019review}\cite{badue2020self}. Through offloading, users benefit from lower power consumption and lower latency \cite{baidya2020vehicular}. irrespective of local computation power, offloading also enables sharing of resources and protects end-devices against IT obsolescence. E.g., large, location-specific models can be pre-loaded on MEC and accessed by all vehicles in range, vehicles do not have to frequently update local models, or worry about outdated onboard units (OBU), since 
%%Even with companies such as NVidia developing vehicle onboard units (OBU) with high computation power \cite{oh2019hardware}, 
%upgrades to post-production OBUs are typically not commercially viable. %Even irrespective of local computation power, the ability to offload tasks to edge/cloud increases flexbility, protecting end-devices against IT obsolescence. 

Currently, computation offloading decisions are strictly separated between the user and the operating side \cite{mach2017mobile}. Users decide what to offload to optimize an individual goal, e.g., latency \cite{baidya2020vehicular}
% \cite{liu2016delay} or energy consumption \cite{zheng2018energy}\cite{zhang2013energy} or reliability \cite{zhou2018reliability} or accuracy \cite{xu2020edge}
or energy efficiency \cite{loukas2017computation}. Apart from expressing their preference through a pre-defined, static and universal QoS matrix \cite{masdari2021qos}, users cannot influence how their tasks are prioritized. The operating side centrally prioritizes tasks and decides resource allocation to optimize a system goal that is based on the QoS matrix, but not always the same as the users' goals, e.g., task maximization \cite{choo2018optimal} or load-balancing \cite{vondra2014qos}. 

This separation poses problems for both user and operating sides, especially in the V2X context. V2X users have private goals \cite{shivshankar2014evolutionary}, are highly autonomous \cite{martinez2010assessing}, reluctant to share information or cooperate, and disobedient to a central planner \cite{feigenbaum2007distributed}. They want flexible task prioritization and influence on resource allocation without sharing private information \cite{li2019learning}. On the operating side, edge cloud computing architecture introduces signalling overhead and information delay in updating site utilization \cite{mach2017mobile}; coupled with growing user autonomy and service customization, traditional centralized optimization methods for resource allocation become challenging due to unavailability of real-time information and computational intractability.

Nonetheless, efforts are made on the operating side to apply centralized methods under such conditions, e.g., using heuristics at run-time \cite{kuo2018deploying} or decoupling into smaller problems \cite{agarwal2018joint}, but these solutions still assume complete information. Other efforts are made to jointly optimize private and system goals through game theoretic approaches---although they naturally deal with \textit{decentralized incentives}, they often require complete information of the game to \textit{centrally execute} the desired outcome. E.g., both \cite{chen2014decentralized} and \cite{cardellini2016game} model network resource allocation problems of autonomous users as a game, but \cite{chen2014decentralized} assumes users share decision information, and \cite{cardellini2016game} assumes all user and node profiles are known \textit{a priori}. None of these assumptions are plausible in practice. 

We, hence, need an interaction mechanism between user and operating sides based on incentives, not rules, and an algorithm that makes decentralized decisions with partial and delayed information in a dynamic environment. There are several challenges with such a mechanism. Users may game the system, resulting in potentially worse overall and individual outcomes \cite{oh2008few}---the first challenge \textbf{C1} is how to incentivize user behavior such that users willingly align their private goals to the system goal while preserving their autonomy. The second challenge \textbf{C2} is finding an algorithm that efficiently learns from partial information with just enough feedback signals, keeping information-sharing at a minimum.

Among learning algorithms for decentralized decision-making, no-regret algorithms apply to a wide range of problems and converge fast; however, they require the knowledge of best strategies that are typically assumed to be static \cite{chang2007no}. Best-response algorithms search for best responses to other users' strategies, not for a equilibrium---they are therefore adaptive to a dynamic environment, but they may not converge at all \cite{weinberg2004best}. To improve the convergence property of best-response algorithms, \cite{bowling2002multiagent} introduces an algorithm with varying learning rate depending on the reward; \cite{weinberg2004best} extends the work to non-stationary environments. However, both these algorithms provably converge only with restricted classes of games. The challenge \textbf{C3} still exists to trade off between optimality and convergence, while keeping computation and communication complexity tractable \cite{feigenbaum2007distributed}. 

We propose a decentralized decision-making mechanism based on second-price sealed-bid auctions that successfully addresses these challenges, using the V2X context as an example. Our method is not restricted to V2X applications---it can be applied to other applications facing similar challenges. 

Second-price auctions are commonly used to distribute public goods, due to its welfare-maximization property. Typically, in a second-price sealed-bid auction, a bidder has no knowledge of other bidders' bidding prices and it only receives the bidding outcome as feedback signal. Additionally, it receives the final price if it wins the bid---this befits our requirement to limit information-sharing. Our mechanism also utilizes the feedback signal to incentivize cooperative behavior and speed up learning.
%In our mechanism, besides bidding outcome, the feedback signal reveals two other facts: the final price regardless of bidding outcome, and the system-wide resource utilization. The former increases learning rate by allowing data collection also when the bidder loses; the latter links individual utility to desired system outcome, such that if a bidder maximizes its private goal, it also gets rewarded for improving the system goal.
We prove that in the static case, the outcome of this mechanism is a Nash equilibrium (NE) and a maximization of social welfare; under specific conditions (Sec.\ref{highContention}), it is also a \emph{Pareto-optimal} allocation of resources (\textbf{C1}). 

For the dynamic case, we choose to use a multi-agent reinforcement learning (MARL) algorithm, for its ability to learn with partial, noisy and delayed information, and a single reward signal (\textbf{C2}). 
Specifically, our core RL algorithm learns the best-response strategy updated in a fictitious self play (FSP) method. FSP addresses strategic users' adaptiveness in a dynamic environment by evaluating state information incrementally, and by keeping a weighted historical record \cite{heinrich2015fictitious}; it is easier to implement than the method proposed in \cite{bowling2002multiagent}, especially with a large state and action space (\textbf{C3}). Our empirical results show that over time, the best-response strategies stabilize and lead to significantly improved individual and overall outcomes. We compare active (learning-capable) and passive (learning-incapable) agents % in our discrete-event simulator 
in both synthetic and realistic V2X setups. Our algorithm demonstrates capability to generalize to very different, unseen environments.

\begin{figure*}[t]
	\centering
	\begin{minipage}{0.95\linewidth}
	\subcaptionbox{Topology \label{topo}}{\includegraphics[width=0.24\linewidth]{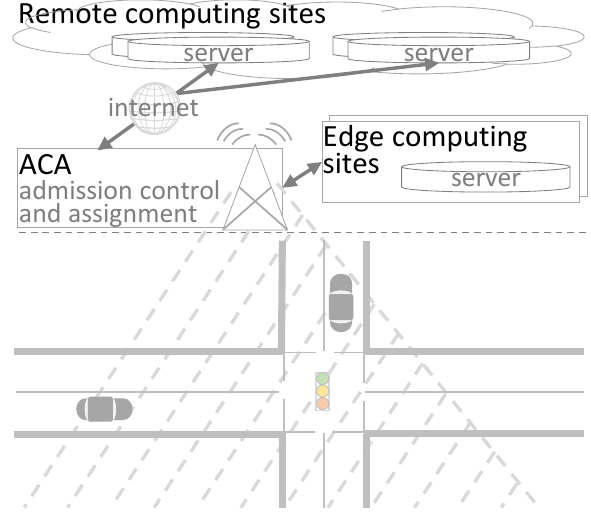}}
	\subcaptionbox{Message sequence \label{flow}}{\includegraphics[width=0.72\linewidth]{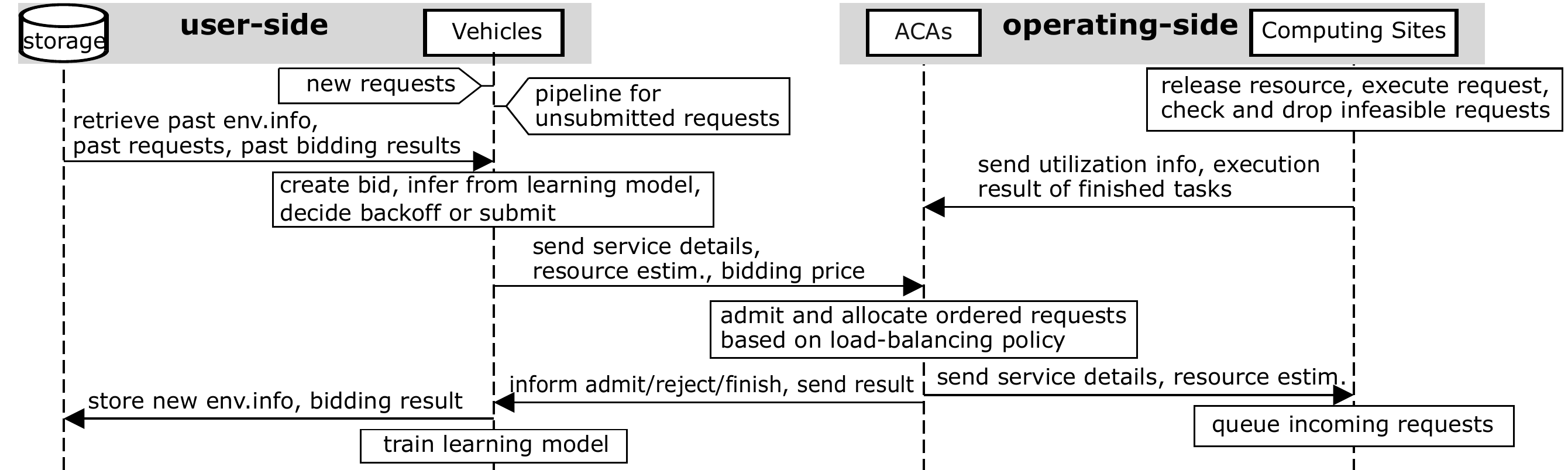}}
	\end{minipage}
	\vspace{-0.2cm}
	\caption{System model}
	\vspace{-0.8cm}
	\label{model}
\end{figure*}

To summarize, our main contributions are:
\begin{itemize}
\item We formulate computation offloading as a decision-making problem with decentralized incentive and execution. The strategic players are incentivized to align private and system goals by balancing between competition and cooperation.
\item We introduce DRACO, a distributed algorithm that learns based on delayed and noisy environment information and a single reward signal. Our solution reduces the requirement for information-sharing to a great extent.
\item We evaluate DRACO in a synthetic setup with randomized parameters, as well as in a realistic setup based on specific mobility model and self-driving applications. Our results show that it significantly increases resource utilization, reduces offloading failure rate, load variation and communication overhead, even in a dynamic environment where information-sharing is limited. The models are easily generalized in different environments. 
\item The authors have provided public access to their code or data at \cite{dracosource}.

\end{itemize}

Sec.\ref{sec:modelproblem} describes system model and problem formulation, Sec.\ref{sec:solution} proposes our solution, Sec.\ref{sec:eval} presents empirical results, Sec.\ref{sec:related} summarizes related work, Sec.\ref{sec:conclusion} concludes the paper.

\section{System Model and Problem Formulation}
\label{sec:modelproblem}

\subsection{System model}
\label{sec:model}

Our system adopts the classic edge cloud computing architecture: user-side vehicles request services; operating-side admission control and assignment (ACA) units (e.g., road side units or base station) control admission of service requests and assign them to different computing sites, which own resources and execute services~\cite{whaiduzzaman2014survey} (Fig.\ref{topo}). 
We propose changes only to: \begin{inparaenum}[1)] \item the algorithm deciding admission and assignments, \item the interaction mechanism. \end{inparaenum} In addition, most signalling needs in our proposed approach are covered by the ISO 20078 standard on extended vehicle web services \cite{iso20078}; additional fields required to pass bidding information are straightforward to implement. Channel security is not the focus of this study.

We first define \emph{service request} in our study; then we explain in detail the user side and the operating side.

\subsubsection{Service request as bid} 

The cloud-native paradigm decomposes services into tasks that can be sequentially deployed~\cite{alliance2019service}. A service request comprises \begin{inparaenum}[1)] \item a task chain, with varying number, type, order and resource needs of tasks, and \item a deadline. \end{inparaenum} We consider a system with custom-tailored services placed at different computing sites in the network; the properties of these services are initially unknown to the computing sites. This enables us to extend use cases into new areas, e.g., self-driving~\cite{5gaausecase1}\cite{5gaausecase2}. We consider independent services, e.g., in self-driving, segmentation and motion planning services can be requested independently.

Classic decentralized decision-making mechanisms include dynamic pricing, negotiations, and auctions. Among these mechanisms, auction is most suitable in a dynamic and competitive environment, where the number of bidders and their preferences vary over time and distribution of private valuations is dispersed \cite{schindler2011pricing}\cite{einav2018auctions}. Among various forms of auction, second-price seald-bid auction maximizes welfare rather than revenue and has limited information-sharing, hence befitting the requirements in our study. Specifically, our approach is based on Vickrey-Clarke-Groves (VCG) for second-price combinatorial auction \cite{vickrey1961counterspeculation}. In our case, we use simultaneous combinatorial auction as a simplified version of VCG---each bidder bids for all commodities separately and simultaneously, without having to specify its preference for any bundle \cite{feldman2013simultaneous}. Since it assumes no correlation between commodities, the simplification befits our study of independent service requests.

We conceive of a vehicle's service request as a \emph{bid} in an auction. Besides the service details, a bid includes the bidding price and the vehicle's estimated resource needs.

\subsubsection{User side} 

We focus on the behavior of vehicles, conceived of as \emph{agents}. A vehicle acts autonomously and privately: it shares no information with other vehicles and only very limited information with the ACA (Sec.\ref{payment} and \ref{fsp}). Its decision objective is to maximize average utility from joining the auction. If it bids a low price and loses, it suffers costs including transmission delay and communication overhead for bidding and rebidding; if it bids a high price and wins, it has reduced payoff. For lower cost or better payoff, it can decide to join the auction at a later time (i.e.\ backoff \cite{cali2000ieee}). But if backoff is too long, it has pressure to pay more to prioritize its request. Therefore, it balances between two options: i) back off and try later or ii) submit the bid immediately to ACA. In our approach, \begin{inparaenum}[1)] \item vehicles are incentivized to balance between backoff and bidding through a cost factor; \item backoff time is learned from state information, not randomly chosen. \end{inparaenum}

We study the learning algorithm in each vehicle. We use passive, non-learning vehicles as benchmark, to quantify the effect of learning on performance. Learning essentially sets the priority of a service request, this priority is used by the ACA to order requests; it is simply constant for non-learning agents, resulting in first-in, first-out processing order.

\subsubsection{Operating side} 

The ACA unit and computing sites are the operating side (Fig.\ref{flow}). The ACA unit decides to admit or reject ordered service requests. Upon admission, it assigns the request to a computing site according to a load-balancing policy. 
%We assume that service instances are already deployed in the network, using methods such as those proposed in~\cite{addis2015virtual}. 
Due to information delay, execution uncertainty, system noise, etc., the resource utilization information at different sites is not immediately available to the ACA unit. If all computing sites are overloaded, service requests are rejected, and vehicles can rebid for a maximum number of times. If the request is admitted but cannot be executed before deadline, the computing site drops the service and informs ACA unit. Vehicles receive feedback on bidding and execution outcome, payment, and resource utilization (Sec.\ref{payment}).

The operating side does not have \emph{a priori} knowledge of the type, priority, or resource requirement of service requests. For example, if at run-time, a site receives a previously unknown service, it uses an estimate of resource needs provided by the vehicle. Over time, a site updates this estimate from repeated executions of the same service. Extension to a more sophisticated form of learning is left to future work.

The total service time of a request is the sum of processing, queueing, and transmission time. Each computing site may offer all services but with different resource profiles (i.e., amount and duration needed), depending on the site's configuration. Site capacity is specified in abstract time-resource units: one such unit corresponds to the volume of a request served in one time unit at a server, when given one resource unit. 

\subsection{Problem formulation}
\label{sec:problem}

%We formulate our problem into a simultaneous combinatorial second-price repeated auction with partial information in a continuous action space with strategic players.
Table~\ref{tab:buyerincentive} summarizes the notation. Let $M$ be the set of vehicles (bidders) and $K$ the set of commodities (service types), each type with total of $n_k^t$ available service slots at time $t$ in computing sites. Bidder $m \in M$ has at most $1$ demand for each service type $k \in K$ at $t$, denoted by $m_k^t \in \{0,1\}$. It draws its actions for each service type---whether to back off $\mathbf{\alpha}_m^t =\{\alpha^t_{m,1}, \cdots, \alpha^t_{m,|K|} \} \in \{0,1\}^{|K|}$, and which price to bid $\mathbf{b}_m^t =\{b^t_{m,1}, \cdots, b^t_{m,|K|} \} \in \mathbb R_+^{|K|}$---from a strategy. $m$'s utility is denoted by $u_m(\mathbf{b}_m^t)$. The bidding price $b_{m,k}^t$ is some unknown function of $m$'s private valuation $v_{m,k} \in \mathbb R_+$ of the service type, $b_{m,k}^t=f_m(v_{m,k})$. The competing bidders draw their actions from a joint distribution $\pi_{-m}^t$ based on $(\mathbf{p}^1,\cdots, \mathbf{p}^{t-1})$, where $\mathbf{p}^t \in \mathbb R_+^{|K|}$ is the payment vector received at the end of time $t$, its element $p_k^t$ is the $(n_k^t+1)^{\textrm{th}}$ highest bid for $k$. Bidder $m$ observes the new $\mathbf{p}^t$ as feedback. The auction repeats for $T$ periods. The goal is to maximize the long-term utility: $\mathcal{U}= \frac{1}{T} \sum\limits_{t=1}^T \sum\limits_{m \in M} u_m(\mathbf{b}_m^t), T\to \infty$.

\begin{table}[t]
 \centering
 \captionof{table}{Symbol definition}
 \label{tab:buyerincentive}
 \begin{tabular}{c l c l c l c }
 Sym. & Description & Sym. & Description\\
 \toprule
 $k \in K$ & service type/commodity & $n_k$ & $k$'s availability\\
 $i \in I$ & service request/bid & $v$ & bid value\\ 
 $m \in M$ & vehicle/bidder & $p$ & payment\\ 
 $x$ & bidding outcome & $u$ & utility\\
 $\alpha$ & backoff decision & $b$ & bidding price\\
 $c$ & lost bid penalty & $q$ & backoff cost\\
 $\beta$ & utilization & $B$ & budget \\
 $h \in H$ & resource types & $\omega_{i,h}$ & $i$'s requirement of $h$\\
 $Q$ & service deadline & $\rho$ & service request details \\
 $\mathbf e_m$ & $m$'s env. variables & $\text{rl}_m^t$ & $m$'s present state for RL\\
 $\text{sl}_m^t$ & $m$'s present state for SL & $P_{-m}^t$ & other bidders' state at $t$\\
 $\mathbf{a}$ & action, $\mathbf{a}=(\alpha, b)$ & $S_m^t$ & complete state for RL\\
 $\theta$ & actor parameters & $\mathbf w$ & critic parameters\\

 \bottomrule 
 \end{tabular}
\end{table}

For any $k$, when availability $n_k^t<\sum\limits_{m \in M} m_k^t$, there is more demand than available service slots and we call it ``high contention''. When $n_k^t \geq \sum\limits_{m \in M} m_k^t$, we call it ``low contention''. In a dynamic environment, $n_k^t$ depends on utilization at $t-1$ and existing demand at $t$. Due to noise and transmission delay in a realistic environment, this information is inaccurate and outdated when it becomes available to the ACA unit for admission control (Fig.\ref{flow}). 

Ideally, an auction is incentive-compatible. Unfortunately, with budget constraint and costs, the second-price auction considered here is no longer incentive-compatible. But we still use this type of auction as we can show (in Sec.\ref{lowContention} and \ref{highContention}) that it maximizes social welfare and optimally allocates resources. We also use the payment signal as additional feedback to aid the bidders' learning process (Sec.\ref{modelDescription}).

\section{Proposed Solution}
\label{sec:solution}

To solve the problem described in Sec.\ref{sec:problem}, we propose DRACO, a \textbf{D}istributed \textbf{R}einforcement-learning algorithm with \textbf{A}uction mechanism for \textbf{C}omputation \textbf{O}ffloading. In Sec.\ref{payment} we define bidder's utility function; in \ref{lowContention} and \ref{highContention}, we prove the existence of NE, maximization of welfare and Pareto optimality in the static case, under both low and high contentions. We introduce our algorithm for dynamic environment in Sec.\ref{fsp} and \ref{modelDescription}. Notations are in Table~\ref{tab:buyerincentive}. For readability, we drop notation for time $t$ in the static case.

\subsection{Utility function}
\label{payment}

In this section, we first build up the utility function based on the payoff of classic second-price auction. Then we add costs for backoff and losing the bid, incentivizing tradeoff between higher chance of success and lower communication overhead. Finally, we add the system resource utilization goal to the individual utility. 

In each auction round, if a bid $i$ for service type $k$ is admitted, its economic gain is $(v_{i,k}-p_{i,k})$. Each bidder has a given $v_{i,k}$ that is \begin{inparaenum}[1)] \item linear to the bidder's estimated resource needs for $k$ and \item within the budget. \end{inparaenum} The first condition guarantees Pareto optimality (Corollary \ref{pareto}); the second avoids overbidding under rationality (Theorem \ref{thm:spa}). Our study does not consider irrational or malicious agents, e.g., whose goal is to reduce social welfare even if individual outcome may be hurt. ACA records $b_{j,k}$ of the highest losing bid $j$ for each $k$, which is also the $(n_k+1)$th highest bidding price. For $n_k=1$ this would be the second highest price, hence the name ``second-price auction''. If $i$ is admitted, the payment $p_{i,k}=b_{j,k}$ is signaled back to the bidder. If $i$ is rejected, it has a constant cost of $c_{i,k}$. The bidder's utility so far:

\begin{flalign}\label{eq:uik}
\mathcal{u}_{i,k} =x_{i,k} \cdot (v_{i,k}-p_{i,k})-(1-x_{i,k}) \cdot c_{i,k} %&&
\end{flalign}

\noindent $x_{i,k}=1$ means bidder wins bid $i$ for a service slot of service type $k$, which implies $b_{i,k}$ is among the highest $n_k$ bids for $k$. Ties are broken randomly.%, or $x_{i,k}=\mathbf{1}|_{i \in \argmax\limits_j \sum b_{j,k},b_{j,k}\in b',b'\in b,|b'|=n_k}$, where $b$ is the bidding price vector from all bidders for $k$, and $b'$ is the subset with top $n_k$ bidding prices; $\mathbf{1}|_\text{conditions}$ is $1$ if the conditions are true, otherwise $0$. 

We add $\alpha_{i,k} \in \{0,1\}$ for backoff decision: bidder submits the bid if $\alpha_{i,k}=1$, otherwise, it backs off with a cost $q_{i,k}$:

\begin{flalign}\label{eq:rewardbackoff}
u_{i,k} = \alpha_{i,k} \cdot (\mathcal{u}_{i,k} -\mathbb{1}_{p_{i,k}=0} \cdot v_{i,k}) + (1-\alpha_{i,k}) \cdot q_{i,k} %&&
\end{flalign}

Especially in high contention, more rebidding causes communication overhead, but less rebidding reduces the chance of success. With $c_{i,k}$, the utility incentivizes less rebidding to reduce system-wide communication overhead (\textbf{C1}). Together with $q_{i,k}$, the bidder is incentivized to trade off between long backoff time and risky bidding. In our implementation (Sec.\ref{sec:eval}), $\alpha$ is continuous between $0$ and $1$ that also indicates the length of backoff time. 

To further align the bidder's objective with system overall objective (\textbf{C1}), we include system resource utilization $\beta$ in the utility. This incentivizes bidders to minimize system utilization. Hence, the complete utility definition is:

\begin{flalign}\label{eq:reward1}
u_i = \sum\limits_{k \in K} u_{i,k} + W \cdot (1-\beta) %&&
\end{flalign}

$W$ is a constant that weighs the utilization objective. In low contention, there is adequate resource to accept all bids, bidding price is less relevant, and backoff decision becomes more important.

To calculate Eq.\ref{eq:reward1}, the bidder needs only these feedback signals: bidding outcome $x_{i,k}$, final price $p_{i,k}$ and system utilization $\beta$, addressing \textbf{C2}.

\subsection{Low contention}
\label{lowContention}

Low contention is much more common in networking and presumably also in future V2X applications as abundant resources are often available. We show that in low contention, the interaction mechanism is a potential game with NE. We use the concept of potential functions to do so \cite{monderer1996potential}:

\begin{defi} $G(I,A,u)$ is an exact potential game if and only if there exists a potential function $\phi(A): A \to \mathbb{R}$ s.t. $\forall i \in I$, $u_i(b_i,b_{-i})-u_i(b'_i,b_{-i})=\phi_i(b_i,b_{-i})-\phi_i(b'_i,b_{-i}), b \in A$.
\end{defi} 

\begin{rem}\label{potentialNE} Players in a finite potential game that jointly maximize a potential function end up in NE. \end{rem}

\begin{proof} See \cite{monderer1996potential}. \end{proof}

\begin{thm} Bidders with utility as Eq.\ref{eq:reward1} participate in a game as described in Sec.\ref{sec:problem} in low contention, the game is a potential game, and the outcome is an NE.\end{thm}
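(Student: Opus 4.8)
The plan is to show that in low contention the game is, as far as payoffs are concerned, a finite game on the backoff decisions $\{0,1\}^{|K|}$ per player, to exhibit an exact potential function for it, and then to invoke Remark~\ref{potentialNE}. First I would work out what ``low contention'', $n_k\ge\sum_{m\in M}m_k$ for every $k$, does to Eq.~\ref{eq:uik}--\ref{eq:reward1}. Since the number of actually submitted bids for $k$ is at most $\sum_m m_k\le n_k$, every submitted bid lies among the $n_k$ highest bids and is admitted, i.e.\ $x_{i,k}=\alpha_{i,k}$; and there is no $(n_k+1)$-th bid, so the payment signal is $p_{i,k}=0$ for every winner. Substituting into Eq.~\ref{eq:uik}--\ref{eq:rewardbackoff} and using $\mathbb{1}_{p_{i,k}=0}=1$, the term $u_{i,k}$ collapses to a function of bidder $i$'s own backoff decision alone: it equals $\mathcal u_{i,k}-v_{i,k}=v_{i,k}-v_{i,k}=0$ when $\alpha_{i,k}=1$, and $q_{i,k}$ when $\alpha_{i,k}=0$. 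In particular it no longer depends on the bidding price $b_{i,k}$ or on the competitors' actions $\mathbf{a}_{-i}$. The only coupling that survives across bidders is the utilisation term $W(1-\beta)$: $\beta$ is fixed by which services execute and by their constant resource profiles $\omega$ (not by any prices, which are pure transfers), so $\beta$ depends on the joint action $b\in A$ only through the profile of backoff decisions, and the term $W(1-\beta)$ enters every player's utility identically.

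Next I would take as potential
\[
\phi(b)\;=\;\sum_{i\in I}\sum_{k\in K}u_{i,k}(\alpha_{i,k})\;+\;W\bigl(1-\beta(b)\bigr),\qquad b\in A,
\]
and check the defining identity of an exact potential game. Fix a bidder $i$ and a unilateral deviation $b_i\to b'_i$ with $b_{-i}$ held fixed. In $\phi$ every summand $u_{j,k}$ with $j\neq i$ is unchanged, since by the previous step it depends only on $\alpha_{j,k}$; hence $\phi(b_i,b_{-i})-\phi(b'_i,b_{-i})$ equals $\sum_{k\in K}\bigl(u_{i,k}(\alpha_{i,k})-u_{i,k}(\alpha'_{i,k})\bigr)+W\bigl(\beta(b'_i,b_{-i})-\beta(b_i,b_{-i})\bigr)$, which is exactly $u_i(b_i,b_{-i})-u_i(b'_i,b_{-i})$ read off from Eq.~\ref{eq:reward1}. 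It is essential here that $\phi$ carries a single copy of $W(1-\beta)$; $\sum_i u_i$, with its $|M|$ copies of that term, would not be an exact potential. Thus the interaction is an exact potential game. Because no bidder's payoff depends on its bidding price in low contention, the strategically relevant game is the finite game whose strategy set is $\{0,1\}^{|K|}$ per player, so $\phi$ attains a maximum on the (finite) profile space; by Remark~\ref{potentialNE} any maximiser of $\phi$ is an NE, which is the asserted outcome.

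The step I expect to be the real obstacle is the first one: rigorously arguing that low contention genuinely decouples $u_{i,k}$ into a function of $i$'s own action only. This needs admission, zero payment, and the bookkeeping of $x_{i,k}$, $\alpha_{i,k}$ and $\mathbb{1}_{p_{i,k}=0}$ to line up so that no residual dependence on $b_{i,k}$ or $\mathbf{a}_{-i}$ remains, and in particular it relies on the static model guaranteeing that the ACA's load-balancing assignment never fails to place a submitted bid whenever the global slot count suffices (so that ``a submitted bid wins'' holds literally, not merely generically). Once that decoupling is established, the remainder is the standard ``player-specific term plus common-interest term'' potential-game argument, and the existence of an NE follows immediately.
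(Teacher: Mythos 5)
Your proposal is correct and follows essentially the same route as the paper: reduce the low-contention utility to $u_i=\sum_k q_{i,k}-\sum_k\alpha_{i,k}q_{i,k}+W(1-\beta)$, exhibit the identical exact potential $\phi=\sum_{j,k}q_{j,k}-\sum_{j,k}\alpha_{j,k}q_{j,k}+W(1-\beta)$ with a single copy of the utilization term, verify the deviation identity, and invoke Remark~\ref{potentialNE} on the finite backoff game. Your explicit justification that $p_{i,k}=0$ and $x_{i,k}=\alpha_{i,k}$ under low contention is merely a more careful spelling-out of a step the paper asserts directly.
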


\begin{proof}
In low contention, $p_{i,k}=0$, as all bids are accepted. $u_i$ is reduced to: $u_i(\alpha_i,\alpha_{-i})=\sum\limits_{k}q_{i,k}-\sum\limits_{k}\alpha_{i,k} q_{i,k} + W \Big(1-\sum_j \alpha_j \cdot \frac{\omega_j}{C}\Big)$, where $-i$ denotes bidders other than $i$. $\omega_j \in \mathbb{R}^{|K|}$ is each bid's resource requirement, $C$ is system capacity. Thus, the auction is reduced to a potential game with discrete action space $\alpha_i \in \mathbb{R}^{|K|}$, and potential function $\phi(\alpha_i,\alpha_{-i})=\sum\limits_{j, k}q_{j,k}-\sum\limits_{j,k} \alpha_{j,k}q_{j,k} + W\Big(1-\sum_j \alpha_j \cdot \frac{\omega_j}{C} \Big), \forall i,j \in I, \forall k \in K$. 

We prove in Appendix \ref{appendix:potentialGame} that $u_i(\alpha_i,\alpha_{-i})-u_i(\alpha'_i,\alpha_{-i})=\phi(\alpha_i,\alpha_{-i})-\phi(\alpha'_i,\alpha_{-i})$, and hence it is a potential game, and bidders maximizing their utilities $u_i$ also maximize the potential function $\phi$. Since $\alpha_i \in \mathbb R^{|K|}$, it is a finite potential game. According to Remark \ref{potentialNE}, the outcome is an NE.
\end{proof}

In low contention, our computation offloading problem becomes a potential game. This enables us to use online learning algorithms such as in \cite{perkins2014game} that converge regardless of other bidders' behaviors. The NE is a local maximization of the potential function: each bidder finds a balance between its backoff cost and the incentive to reduce overall utilization. Empirical results in Sec.\ref{sec:eval} confirm that over time this results in a more balanced load.

\subsection{High contention}
\label{highContention}

In high contention, $\alpha$ is used in a repeated auction to avoid congestion and ensure better reward over time. To simplify the proofs, we consider only the time steps where $\alpha=1$ (bidder joins auction). We also take a small enough $W$, such that the last term in Eq.\ref{eq:reward1} can be omitted in high contention, to further simplify the utility function in the proof.

\begin{thm}\label{thm:spa} In a second-price auction, where bidders with utility as Eq.\ref{eq:reward1} compete for service slots as commodities in high contention, \begin{inparaenum}[1)] \item bidders' best-response is of linear form, \item the outcome is an NE and \item welfare is maximized.\end{inparaenum}
\end{thm}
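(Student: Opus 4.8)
The plan is to analyze the high-contention case as a standard second-price (Vickrey) auction with the utility of Eq.~\ref{eq:reward1}, restricted (as the theorem allows) to rounds with $\alpha=1$ and with the $W(1-\beta)$ term dropped. Under these simplifications, bidder $i$'s per-commodity utility is $u_{i,k} = x_{i,k}(v_{i,k}-p_{i,k}) - (1-x_{i,k})c_{i,k}$, where $p_{i,k}$ is the $(n_k+1)$th highest bid and $x_{i,k}$ indicates that $b_{i,k}$ is among the top $n_k$ bids. Because the auction is simultaneous and separable across commodities (no bundle valuations), it suffices to argue for a single fixed commodity $k$ and then take the sum; I would state this reduction explicitly at the start.

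\textbf{Part (2): truthful-type bidding is a (weakly) dominant-strategy / Nash equilibrium.}
First I would run the classical Vickrey argument on the shifted valuation. The lost-bid penalty $c_{i,k}$ does not depend on $b_{i,k}$ given $x_{i,k}$, but it does change the \emph{threshold at which winning is profitable}: winning yields $v_{i,k}-p_{i,k}$, losing yields $-c_{i,k}$, so bidder $i$ prefers to win exactly when $p_{i,k} \le v_{i,k}+c_{i,k}$. The standard case analysis — comparing a deviation $b'_{i,k}$ against the ``truthful'' bid $\hat b_{i,k}=v_{i,k}+c_{i,k}$, splitting on whether the deviation changes the win/lose outcome, and checking that in every case the payoff weakly decreases — shows that bidding $\hat b_{i,k}=v_{i,k}+c_{i,k}$ is a weakly dominant strategy, hence these bids form an NE. Theorem~\ref{thm:spa}'s rationality/budget hypothesis (valuations within budget, valuation linear in estimated resource need) is what guarantees $\hat b_{i,k}$ is an admissible bid and that no bidder overbids beyond its budget.

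\textbf{Part (1): the best response is linear.}
This follows by inspecting the equilibrium bid: $\hat b_{i,k}=v_{i,k}+c_{i,k}$ is an affine function of the private valuation $v_{i,k}$, and since $v_{i,k}$ is itself linear in the bidder's estimated resource requirement $\omega_{i,k}$ (stated in Sec.~\ref{payment}), the best-response map is linear in the natural state variable. I would phrase ``linear form'' as: $b_{i,k} = f_m(v_{m,k})$ with $f_m$ affine, matching the paper's notation $b_{m,k}^t=f_m(v_{m,k})$ and justifying the linear function approximator used later in DRACO.

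\textbf{Part (3): welfare maximization.}
With every bidder playing $\hat b_{i,k}=v_{i,k}+c_{i,k}$, the $n_k$ slots go to the $n_k$ bidders with the highest $v_{i,k}+c_{i,k}$. I would argue this maximizes social welfare: define total welfare of an allocation as $\sum_{i \text{ wins}} v_{i,k} - \sum_{i \text{ loses}} c_{i,k} = \sum_i(v_{i,k}) \cdot[\text{lost mass}] \dots$ — more cleanly, adding the constant $\sum_i c_{i,k}$ to every allocation's welfare turns the objective into $\sum_{i \text{ wins}}(v_{i,k}+c_{i,k})$, which is exactly what greedily assigning slots to the top-$(v_{i,k}+c_{i,k})$ bidders maximizes. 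Summing over $k$ gives system-wide welfare maximization. The main obstacle I anticipate is being careful that ``welfare'' is defined to include the penalty terms $c_{i,k}$ (otherwise the VCG optimality is for the shifted valuations, not the raw $v_{i,k}$); I would make this definitional choice explicit and note that it is the economically correct notion here, since an un-served request genuinely incurs the cost $c_{i,k}$ in the system. A secondary subtlety is tie-breaking — ties are broken randomly, so I would note the statements hold for generic valuations or up to a measure-zero set, and that random tie-breaking does not affect the welfare value.
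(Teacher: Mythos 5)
Your route is genuinely different from the paper's. The paper's own proof (Appendix \ref{appendix:SPAwithpenalty}) does not use a dominance argument at all: it works in the Bayesian framework of \cite{sun2006wireless}, where each bidder $i$ has a continuously distributed valuation $v_i\in[l_i,m_i]$, a budget $B_i$ that caps its admissible bids, and chooses a strategy \emph{function} $f_i(v_i)$; the paper extends \cite{sun2006wireless} by adding the second-price payment and the lost-bid cost $c_i$, proves a best-response characterization (Theorem \ref{thm:bestResp}) in which the response is only pinned down up to a set below the opponent's lowest bid $a_1$ and above its highest bid $b_1$, and is an affine segment $j_2 v_2+d_2$ interpolating between them in the middle; NE existence and welfare maximization are then obtained from this piecewise-linear structure, with details deferred to the supplemental material. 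You instead treat each service type as a unit-demand multi-unit Vickrey auction and argue that the shifted-truthful bid $\hat b_{i,k}=v_{i,k}+c_{i,k}$ is weakly dominant, which, if valid, delivers parts (1)--(3) in a few lines, is more elementary, and gives a slope-one best response rather than the general affine form the paper derives; your constant-shift reformulation of welfare (adding $\sum_i c_{i,k}$) is a clean way to state part (3) and the tie-breaking remark is fine.

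The genuine gap is the budget cap. The paper's model stipulates that bidders cannot bid above their budgets, and the hypothesis in Sec.\ref{payment} only places the valuation $v_{i,k}$ within budget, not $v_{i,k}+c_{i,k}$; your assertion that the hypothesis ``guarantees $\hat b_{i,k}$ is admissible'' is therefore unsupported. When the cap binds, the weakly dominant strategy becomes $\min(v_{i,k}+c_{i,k},B_i)$, which is no longer of the single affine form you claim, and, more importantly, your welfare step fails: slots then go to the top bidders by \emph{capped} bids, which need not be the top bidders by $v_{i,k}+c_{i,k}$ (e.g., a bidder with large $v+c$ but small budget loses to one with smaller $v+c$ and a large budget), and budget-constrained second-price auctions are in general not welfare-maximizing. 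This is precisely why the paper notes at the end of Sec.\ref{sec:problem} that, with budget constraints and costs, the auction is no longer incentive-compatible, and why its proof proceeds through the budget-aware Bayesian best-response analysis of \cite{sun2006wireless} rather than Vickrey dominance. To repair your argument you must either state the extra assumption $v_{i,k}+c_{i,k}\le B_i$ explicitly (it is not in the theorem), or analyze the capped regime, at which point you are effectively reconstructing the paper's piecewise-linear characterization.
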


\begin{proof} See Appendix \ref{appendix:SPAwithpenalty}.\end{proof} 

When bidders bid for service slots, the required resources are allocated. Theorem \ref{thm:spa} guarantees the maximization of welfare (total utility of bidders), but it does not guarantee the optimality of the resource allocation, unless the following conditions are met: if bidders' valuation of the commodity is linear to its resource requirement, and all bidders have some access to resources (fairness). 

\begin{cor}\label{pareto} In a second-price auction, where $M$ bidders with utility as Eq.\ref{eq:reward1} compete in high contention, the outcome is an optimal resource allocation, if the bidders' valuation of commodities is linear to resource requirement and all bidders have a positive probability of winning.
\end{cor}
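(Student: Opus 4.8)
The plan is to upgrade the welfare guarantee of Theorem~\ref{thm:spa} to a statement about the allocation of \emph{resources}, so I would run a transferable-utility / first-welfare-theorem style argument. First I would fix the objects: a resource allocation is an assignment $x=(x_{i,k})$ of the $n_k$ service slots of each type $k$ to bidders (feasible meaning $\sum_i x_{i,k}\le n_k$ and each bidder gets at most one slot per type) together with the resources $\omega_{i,k}$ those assignments consume and the monetary transfers $p_{i,k}$; ``optimal'' is meant in the Pareto sense, i.e.\ no feasible reallocation makes some bidder strictly better off without hurting another. As in the proof of Theorem~\ref{thm:spa} I would restrict to the rounds with $\alpha=1$ and take $W$ small, so that the utility in Eq.\ref{eq:reward1} of a winner is $v_{i,k}-p_{i,k}$ and of a loser $-c_{i,k}$; Theorem~\ref{thm:spa} then gives that the NE assignment $x^{\star}$ maximizes social welfare $\sum_{i\in M} u_i$ over feasible assignments, and since the best response is linear (hence order-preserving) and each valuation lies within the budget, the winners for type $k$ are exactly the $n_k$ highest-\emph{value} bidders, so $x^{\star}$ in fact maximizes the total valuation $\sum_{i,k} x_{i,k} v_{i,k}$ among feasible assignments.

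The core step is the standard equivalence ``Pareto optimal $\Leftrightarrow$ total-value maximizing'' for quasi-linear utilities, which I would instantiate by contradiction: if $x^{\star}$ were not Pareto optimal, pick a feasible assignment $x'$ with $u_i(x')\ge u_i(x^{\star})$ for every $i$ and strict inequality for some $i$; adding these inequalities and using that the payments are pure transfers (so money can be redistributed without leaving the feasible set, exactly the quasi-linear structure available here) yields $\sum_{i,k} x'_{i,k} v_{i,k} > \sum_{i,k} x^{\star}_{i,k} v_{i,k}$, contradicting the previous paragraph. This is where the two hypotheses enter. Linearity of $v_{i,k}$ in the resource requirement $\omega_{i,k}$ is what makes this contradiction about resources at all: the quantity the auction optimizes becomes a strictly increasing function of the allocated-resource profile, so a Pareto improvement in utilities is a Pareto improvement in resource usage and conversely; without linearity, value-welfare optimality would carry no information about resources. ``Every bidder wins with positive probability'' is the fairness/participation hypothesis that excludes the degenerate outcome in which some bidder's realized resource share is zero with certainty; such a bidder could be admitted whenever spare capacity appears at no one's expense, so the equivalence would fail on the relevant (fairness-constrained) frontier, and the assumption also pins down which tie-broken allocation is realized.

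I expect the main obstacle to be the gap between \emph{bid}-welfare, which is literally what Theorem~\ref{thm:spa} delivers, and \emph{value}-welfare, which is what Pareto optimality of the resource allocation needs: I would have to argue carefully that under rationality, the linear best response, and the within-budget and positive-win-probability assumptions, the set selected by the second-price rule is exactly the feasible set maximizing $\sum_{i,k} x_{i,k} v_{i,k}$, including the boundary cases of ties at the clearing price $p_k$ where random tie-breaking is used (this is precisely why the statement is phrased with ``positive probability of winning''). Once that identification is in place, the rest is the routine transferable-utility manipulation above, plus a check that the dropped term $W(1-\beta)$ and the loser costs $c_{i,k}$ cannot overturn the strict inequality in the small-$W$ regime for which the corollary is stated.
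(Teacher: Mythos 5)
Your proposal and the paper's proof diverge at exactly the point you flag as ``the main obstacle,'' and that obstacle is not a routine check --- it is the entire content of the corollary. You want to pass from Theorem~\ref{thm:spa}'s welfare statement to ``the winners are exactly the $n_k$ highest-\emph{value} bidders, so the NE maximizes $\sum_{i,k} x_{i,k} v_{i,k}$,'' and then invoke the standard quasi-linear equivalence between total-value maximization and Pareto optimality. But the equilibrium best responses established in Appendix~\ref{appendix:SPAwithpenalty} are linear with \emph{bidder-specific} coefficients, $f_i(v_i)=j_i v_i + d_i$; the second-price rule orders bidders by $j_i v_i + d_i$, not by $v_i$, so with asymmetric $(j_i,d_i)$ the bid order need not coincide with the value order and the NE allocation does not in general maximize total valuation (nor, via linearity, total allocated resource) unconditionally. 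Your sketch defers this identification (``I would have to argue carefully\dots'') without supplying an argument, and as stated the identification is false; moreover you use the positive-win-probability hypothesis only as a vague non-degeneracy condition, when in fact it is the binding constraint that makes the statement true at all.

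The paper's proof (Appendix~\ref{appendix:paretoOptimal}) formalizes ``optimal resource allocation'' differently: it maximizes the total allocated resource $\omega_1+\omega_2$ \emph{subject to a fairness constraint} $\mathbb{E}[\omega_1|_{A=1}]/\mathbb{E}[\omega_2|_{A=2}]=\gamma>0$ (this is where ``every bidder wins with positive probability'' enters), and it proceeds by Lagrangian duality in the style of the cited wireless-auction work: choose the multiplier $\lambda^*$ for the fairness constraint, use the linearity hypothesis to write $v_i=g_i\omega_i+k_i$ with $g_1=(1+\lambda^*)/j_1$, $k_1=-d_1/j_1$, $g_2=(1-\gamma\lambda^*)/j_2$, $k_2=-d_2/j_2$, and observe that the NE threshold rule $j_1v_1+d_1\ge j_2v_2+d_2$ then rewrites as $\omega_1(1+\lambda^*)\ge\omega_2(1-\gamma\lambda^*)$, i.e.\ exactly the solution of the constrained resource-maximization problem. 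So the asymmetry of the equilibrium bid functions is absorbed into the Lagrangian weights rather than argued away, and the fairness constraint is what reconciles the bid-ordered allocation with resource optimality. Your first-welfare-theorem route would prove a different (and, without the constrained reformulation, unattainable) statement; to repair it you would need either symmetric bidders (so bid order equals value order) or to reproduce the paper's constrained-optimization argument, at which point the quasi-linear Pareto machinery is no longer doing the work.
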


\begin{proof} See Appendix \ref{appendix:paretoOptimal}.\end{proof} 

Our setup meets both conditions.

	\begin{figure}[t]
		\centering
		\includegraphics[width=0.9\linewidth]{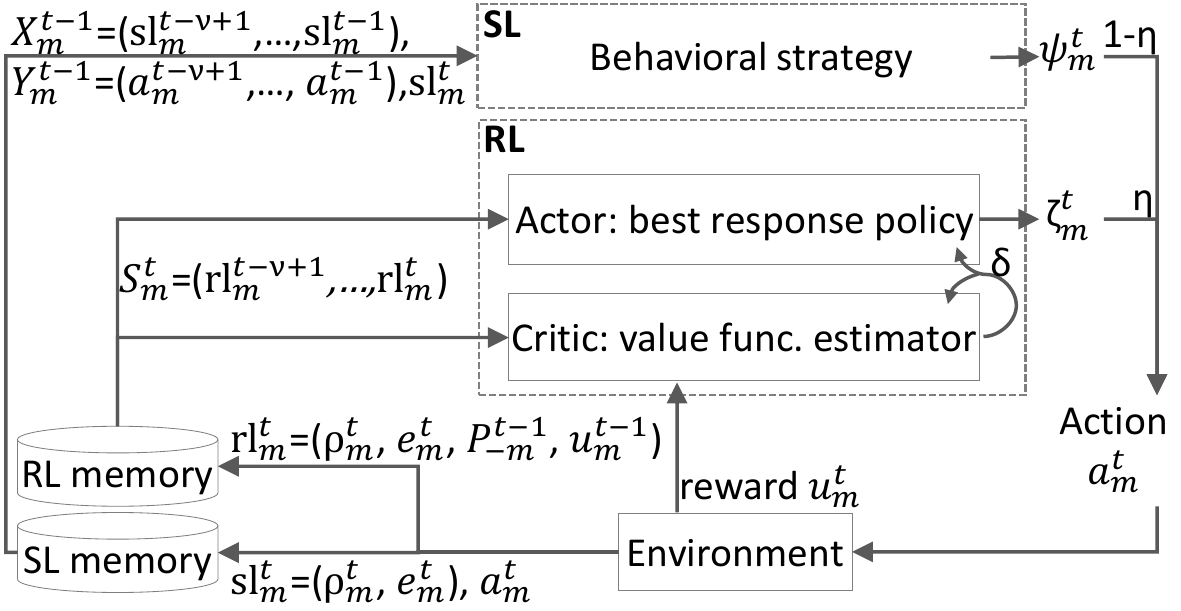}
		\vspace*{-0.2cm}
		\caption{RL and SL algorithms}
%		\vspace*{-0.2cm}
		\label{fspchart}
	\end{figure}

\subsection{The FSP algorithm}
\label{fsp}

The FSP algorithm addresses the convergence challenge of a best-response algorithm (\textbf{C3}). FSP balances exploration and exploitation by replaying its own past actions to learn an average behavioral strategy regardless of other bidders' strategies; then it cautiously plays the behavioral strategy mixed with best response \cite{heinrich2015fictitious}. The method consists of two parts: a supervised learning (SL) algorithm predicts the bidder's own behavioral strategy $\psi$, and an RL algorithm predicts its best response $\zeta$ to other bidders. The bidder has $\eta,\lim\limits_{t \to \infty} \eta =0$ probability of choosing action $\mathbf{a}=\zeta$, otherwise it chooses $\mathbf{a}=\psi$. The action includes backoff decision $\alpha$ and bidding price $b$. If $\alpha$ is above a threshold, the bidder submits the bid; otherwise, the bidder backs off for a duration linear to $\alpha$. We predefine the threshold to influence bidder behavior: with a higher threshold, the algorithm becomes more conservative and tends to back off more service requests. A learned threshold (e.g., through meta-learning algorithms) is left to future work.

Although FSP is only convergent in certain classes of games \cite{LESLIE2006285}, and in our case of a multi-player, general-sum game with infinite strategies, it does not necessarily converge to an NE, it is still an important experiment as our application belongs to a very general class of games; and empirical results show that by applying FSP, overall performance is greatly improved compared to using only RL. The FSP is described in Alg.\ref{algorithm}. 

Input to SL includes bidder $m$'s service requests---service type, resource amount required, and deadline: $\rho_m^t=\{(k_i, \omega_{i,h},Q_{i})| i \in I, h \in H\}$ ($m$ can create multiple bids, each an independent request for service type $k_i$; $\rho_m^t$ is the set of all $m$'s bids at $t$), and current environment information visible to $m$, denoted $e_m^{t}$ (e.g., number of bidders in the network and system utilization $\beta^t$). SL infers behavioral strategy $\psi_m^t$. The input $\text{sl}_m^t=(\rho_m^t,e_m^t)$ and actual action $\mathbf{a}_m^t$ are stored in SL memory to train the regression model. we use a multilayer perceptron in our implementation. 

Input to RL is constructed from $m$'s present state $\text{rl}_m^t$. $\text{rl}_m^t$ includes \begin{inparaenum}[1)] \item $\rho_m^t$; \item $e_m^t$; \item previous other bidders' state $P_{-m}^{t-1}$, represented by the final price $p_k$, or $P_{-m}^t=\mathbf{p}^t=\{ p_k^t| k \in K \}$; and \item calculated utility $u_m^{t-1}$ according to Eq.\ref{eq:reward1}. \end{inparaenum} To consider historical records, we take $\nu$ most current states to form the complete state input to RL: $S_m^t=\{\text{rl}_m^\tau|\tau=t-\nu+1,\cdots,t\}$. RL outputs best response $\zeta_m$ (Fig.\ref{fspchart}). The input consists of bidder's private information and easily obtainable public information, e.g., environment data and past prices, thus addressing \textbf{C2}.

	  \begin{algorithm}[t]
	  \small
	  \begin{algorithmic}[1]
	  \STATE Initialize $\psi_m,\zeta_m$ arbitrarily, $t=1,\eta=1/t,\nu,P_{-m}^{t-1}=\mathbf{0},u_m^{t-1}=0$, observe $e_m^t$, create $\text{rl}_m^t,\text{sl}_m^t$ and add to memory
	  \WHILE{true}
	    \STATE Take action $\mathbf{a}_m^t=(1-\eta)\psi_m^t+\eta \zeta_m^t$
	    \STATE Receive $P_{m}^t$, calculate $u_m^t$, observe $\rho_m^{t+1},\mathbf e_m^{t+1}$
	    \STATE Create and add state to RL memory: $\text{rl}_m^{t+1}$
	    \STATE Create and add state to SL memory: $(\text{sl}_m^{t+1},\mathbf{a}_m^t)$
	    \STATE Construct $S_m^t,S_m^{t+1}$, calculate $\zeta_m^{t+1}=\text{RL}(S_m^t,S_m^{t+1},u_m^t)$
	    \STATE Calculate $\psi_m^{t+1}=\text{SL}(\text{sl}_m^{t+1})$
	    \STATE $t \gets t+1$, $\eta \gets 1/t$
	  \ENDWHILE
	  \end{algorithmic}
	  \caption{FSP algorithm for bidder $m$}
	  \label{algorithm}
	  \end{algorithm}

\subsection{The RL algorithm}
\label{modelDescription}

Authors of \cite{khaledi2016optimal} use VCG and a learning algorithm for the bidders to adjust their bidding price based on budget and observation of other bidders. Our approach is similar in that we estimate other bidders' state $P_{-m}$ from payment information and use the estimate as basis for a policy. Also, similar to their work, payment information is only from the seller.

Our approach differs from \cite{khaledi2016optimal} in several major points. We use a continuous space for bidder states (i.e., continuous value for payments). As also mentioned in \cite{khaledi2016optimal}, a finer-grained state space yields better learning results. Moreover, we consider multiple commodity/service types, which is more realistic, and therefore has a wider range of applications. Further, we do not explicitly learn the transition probability of bidder states. Instead, we use historical states as input and directly determine the bidder's next action.

We use the actor-critic algorithm \cite{sutton2018reinforcement} for RL (Alg.\ref{algorithmRL}). The \textbf{critic} learns a state-value function $V(S)$. Parameters of the function are learned through a neural network that updates with $\mathbf w \gets \mathbf w + \gamma^w\delta \nabla \hat V(S, \mathbf w)$, where $\gamma$ is the learning rate and $\delta$ is the temporal difference (TD) error. For a continuing task with no terminal state, the average reward is used to calculate $\delta$ \cite{sutton2018reinforcement}: $\delta =u-\bar u+\hat V(S',\mathbf w) - \hat V(S,\mathbf w)$. In our case, the reward is utility $u$. We use exponential moving average (with rate $\lambda$) of past rewards as $\bar u$.

%\begin{figure*}[b]
%	\centering
%	\vspace*{-0.2cm}
%	\includegraphics[width=0.7\linewidth]{figures/CNNHW.pdf}\hfill
%%	\vspace*{-0.2cm}
%	\caption{Stacked CNN with highway structure}
%	\vspace*{-0.3cm}
%	\label{cnnhw}
%\end{figure*}

The \textbf{actor} learns the parameters of the policy $\pi$ in a multidimensional and continuous action space. Correlated backoff and bidding price values are assumed to be normally distributed: $F(\mu,\Sigma) = \frac{1}{\sqrt{|\Sigma|}} \exp(-\frac{1}{2}(\mathbf x-\mu)^T\Sigma^{-1}(\mathbf x - \mu))$. For faster calculation, instead of covariance $\Sigma$, we estimate lower triangular matrix $L$ ($LL^T=\Sigma$). Specifically, the actor model outputs the mean vector $\mu$ and the elements of $L$. Actor's final output $\mathbf{\zeta}$ is sampled from $F$ through: $\mathbf{\zeta} = \mu + L\mathbf{y}$, where $\mathbf{y}$ is an independent random variable from standard normal distribution. Update function is $\theta \gets \theta + \gamma^\theta \delta \nabla \ln \pi(\mathbf{a}|S,\theta)$. We use $\frac{\partial \ln F}{\partial \mu} =\Sigma (\mathbf x-\mu)$ and $\frac{\partial \ln F}{\partial \Sigma} = \frac{1}{2} (\Sigma(\mathbf x-\mu)(\mathbf x-\mu)^T\Sigma-\Sigma)$ for back-propagation.

The objective is to find a strategy that, given input $S_m^t$, determines $\mathbf{a}$ to maximize $\frac{1}{T-t}\mathbb{E}[\sum_{t'=t}^T u_m^{t'}]$. To implement the actor-critic RL, we use a stacked convolutional neural network (CNN) with highway \cite{srivastava2015training} structure similar to the discriminator in \cite{yu2017seqgan} for both actor and critic models. The stacked-CNN has diverse filter widths to cover different lengths of history and extract features, and it is easily parallelizable, compared to other sequential networks. Since state information is temporally correlated, such a sequential network extracts features better than multilayer perceptrons. The highway structure directs information flow by learning the weights of direct input and performing non-linear transform of the input.

In low contention, authors of \cite{perkins2014game} prove that an actor-critic \cite{sutton2018reinforcement} RL algorithm converges to NE in a potential game. In high contention, although we prove the existence of an NE in the static case, the convergence property of our algorithm in a stochastic game is not explicitly analyzed. We show it through empirical results in Sec.\ref{sec:eval}.

	  \begin{algorithm}[t]
	  \small
	  \begin{algorithmic}[1]
	  \STATE Initialize $\theta, w$ arbitrarily. Initialize $\lambda$
	  \WHILE{true}
	    \STATE Input $t$ and $S_m^t,S_m^{t+1}$ constructed from RL memory
	    \STATE Run critic and get $\hat V(S_m^{t}, \mathbf w),\hat V(S_m^{t+1},\mathbf w)$
	    \STATE Calculate $\bar u_m=\lambda \bar u_m$ and $\delta$ (utility $u$ is reward $R$)
	    \STATE Run actor and get $\mu(\theta), \Sigma(\theta)$
	    \STATE Sample $\zeta_m^{t+1}$ from $F(\mu,\Sigma)$, update $\mathbf w$ and $\theta$
	  \ENDWHILE
	  \end{algorithmic}
	  \caption{RL algorithm for bidder $m$}
	  \label{algorithmRL}
	  \end{algorithm}

\section{Evaluation}
\label{sec:eval}

We develop a Python discrete-event simulator, %(each time step represents 1 millisecond)
with varying number of vehicles of infinite lifespan, one MEC with ACA and edge computing site, and one remote computing site (extention to multiple ACA units and computing sites is left to future work). The edge and remote sites have different resource profiles. To imitate a realistic, noisy environment, the remote site is some distance to the ACA unit, such that data transmission would cause non-negligible delay in state information update. We also add a small, normally distributed noise to this delay, as well as to the actual resource required for a service. Each vehicle is randomly and independently initialized with a budget of ``high'' or ``low'' with 50\% probability. For the operating-side load-balancing policy, we apply state-of-the art resource-intensity-aware load-balancing (RIAL) \cite{8006307} with slight modifications. The method achieves dynamic load-balancing among computing sites through resource pricing that is correlated to the site's load, and loads are shifted to ``cheaper'' sites. Finally, we compare the performance of active agents (DRACO on the user side, RIAL on the operating side, or D+R) to passive agents (only RIAL on the operating side), as shown in Fig.\ref{flow}. 

We test our approach in two steps. First, we comprehensively study the performance of active agents in a synthetic setup with randomized inputs and a wide range of environment parameters. Then, we choose a realistic scenario, a 4-way traffic intersection with realistic mobility model for vehicles and with incoming service requests modeled after specific V2X applications, to show the generalization properties of DRACO.
We evaluate the following metrics:
\begin{itemize}
\item \textbf{Offloading failure rate (OFR):} Ratio of failed offloading requests rejected by ACA or not executed before deadline. 
\item \textbf{Resource utilization:} Ratio of resources effectively utilized at computing sites.
\item \textbf{Rebidding overhead:} If a bid is rejected before deadline, the vehicle can bid again. More rebidding causes communication overhead, but less rebidding reduces the chance of success. We study this tradeoff, comparing the average number of actual rebiddings per vehicle within maximum permitted-rebidding (MP).
\end{itemize}

	\begin{figure}[t]
		\centering
		\subcaptionbox{DRACO reduces the overall offloading failure rate.\label{trainingutilization-left}}{\includegraphics[width=0.5\linewidth]{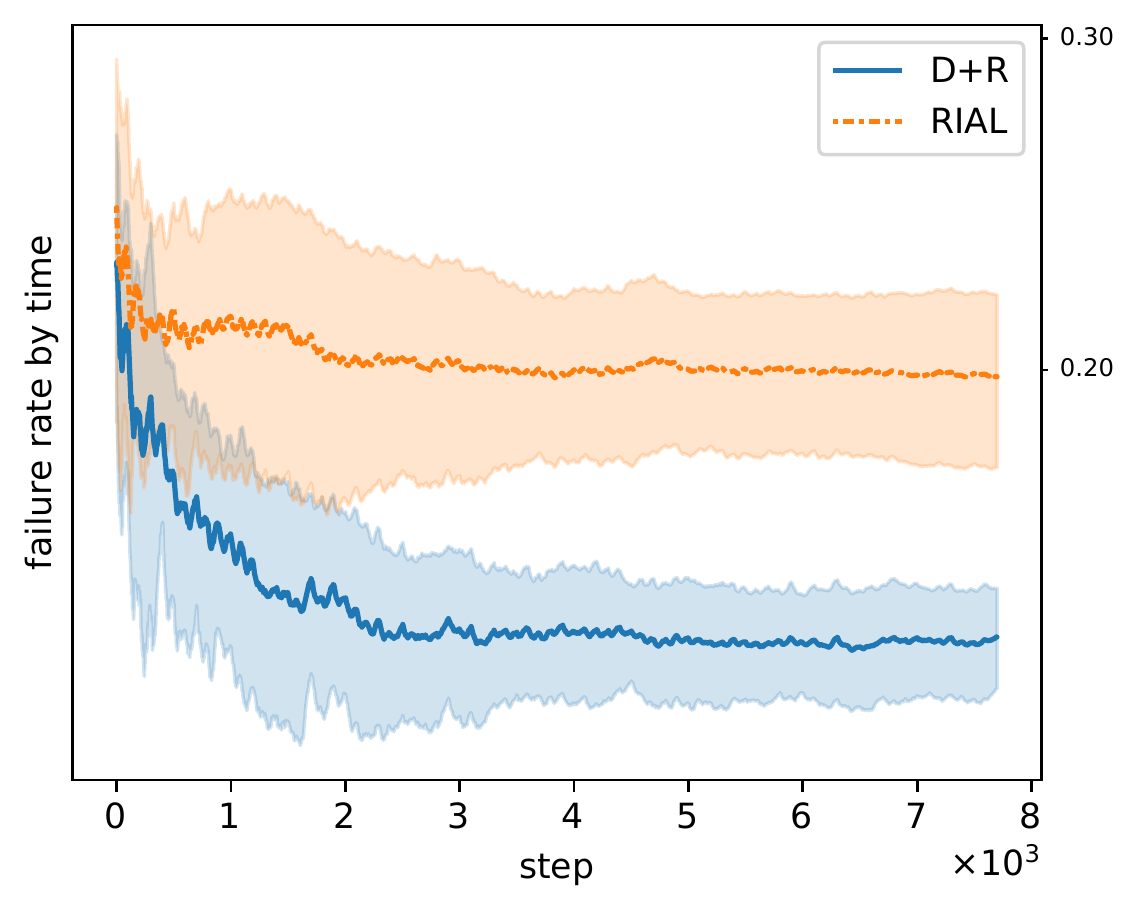}}\hfill
		\subcaptionbox{DRACO learns to better utilize resource in remote computing site.\label{trainingutilization-right}}{\includegraphics[width=0.5\linewidth]{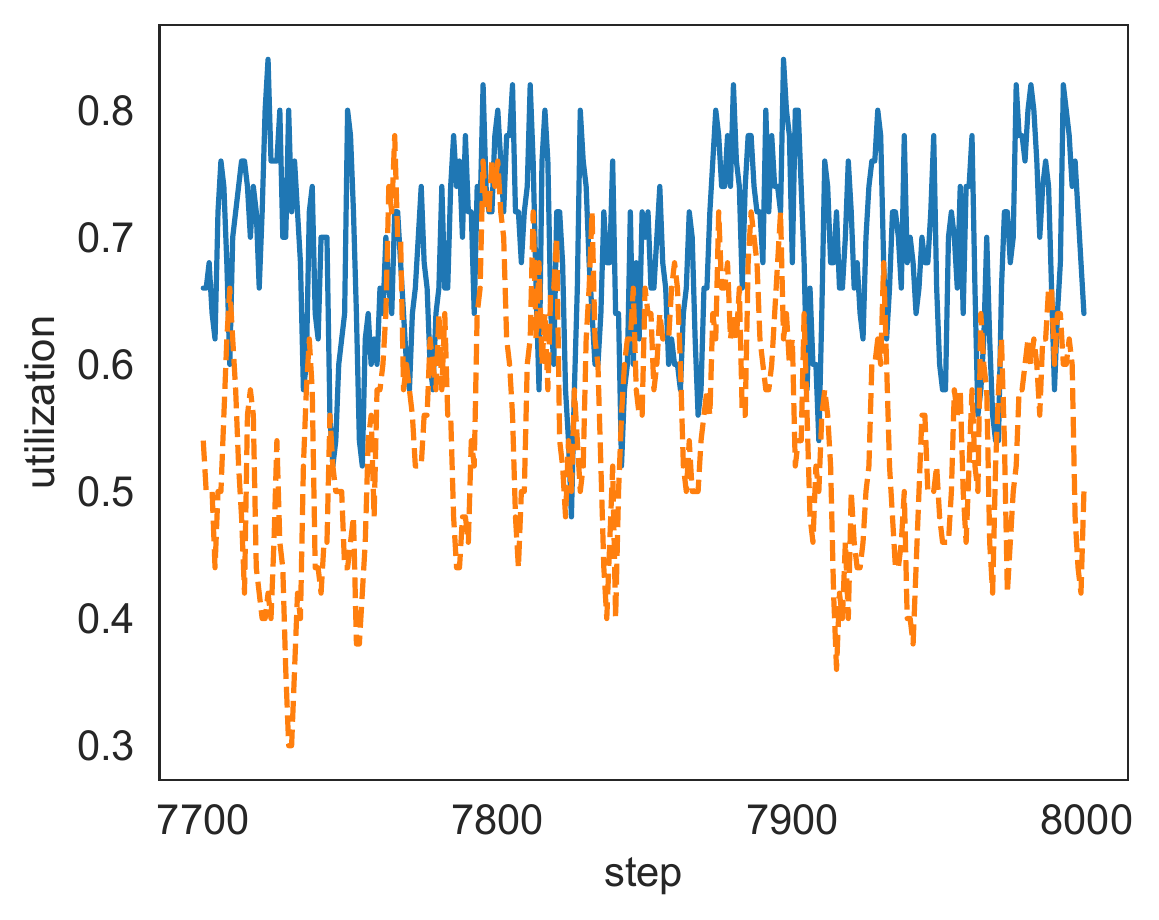}}\hfill
		\vspace*{-0.2cm}
		\caption{OFR and resource utilization, capacity=$60$, MP=$1$}
		\label{trainingutilization} 
	\end{figure}

\begin{figure*}[t]
	\centering
	\subcaptionbox{D+R reduces OFR by $40\%$, achieves $1\%$ OFR in low contention; RIAL OFR only reaches $2\%$.\label{successbycapa}}{\includegraphics[width=0.235\linewidth]{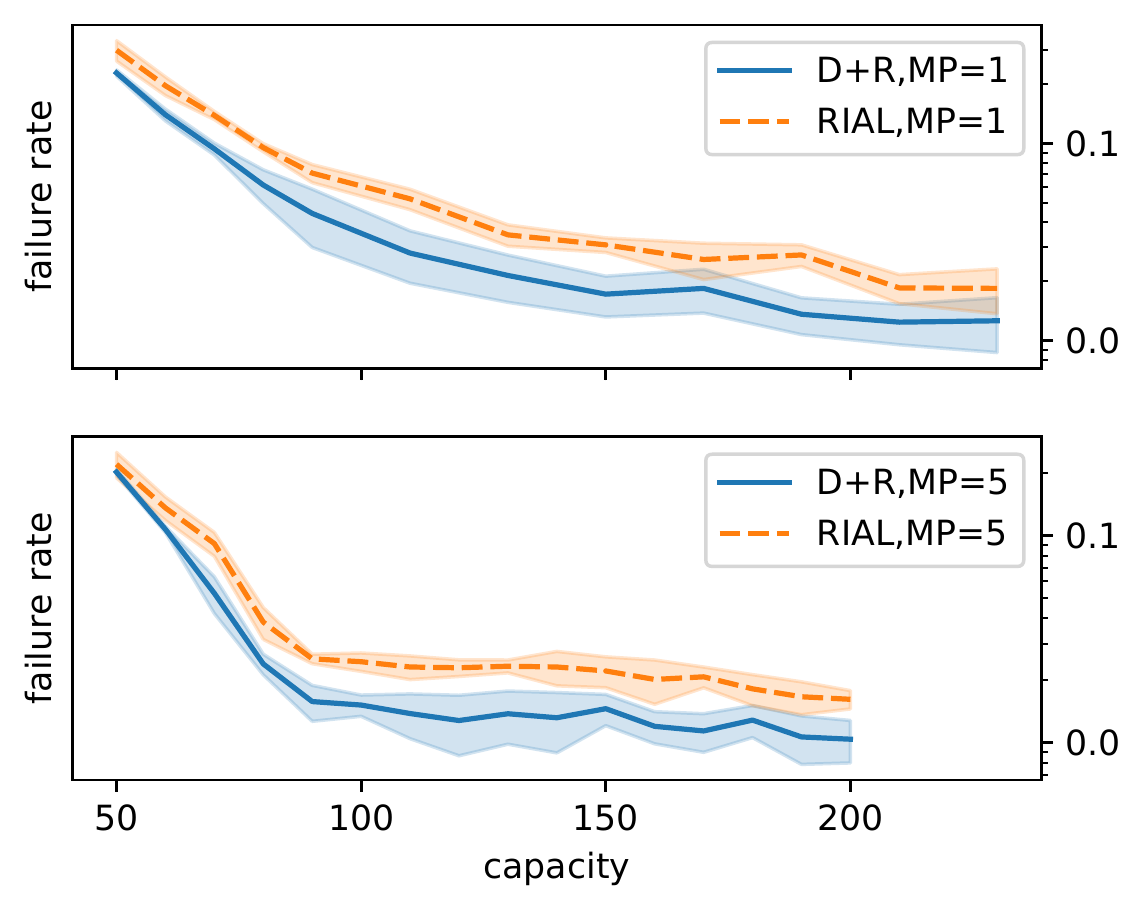}}\hfill
	\subcaptionbox{D+R needs less resource for same OFR (e.g., $2\%$ failure and MP=$1$, $38\%$ less resource needed).\label{capause}}{\includegraphics[width=0.235\linewidth]{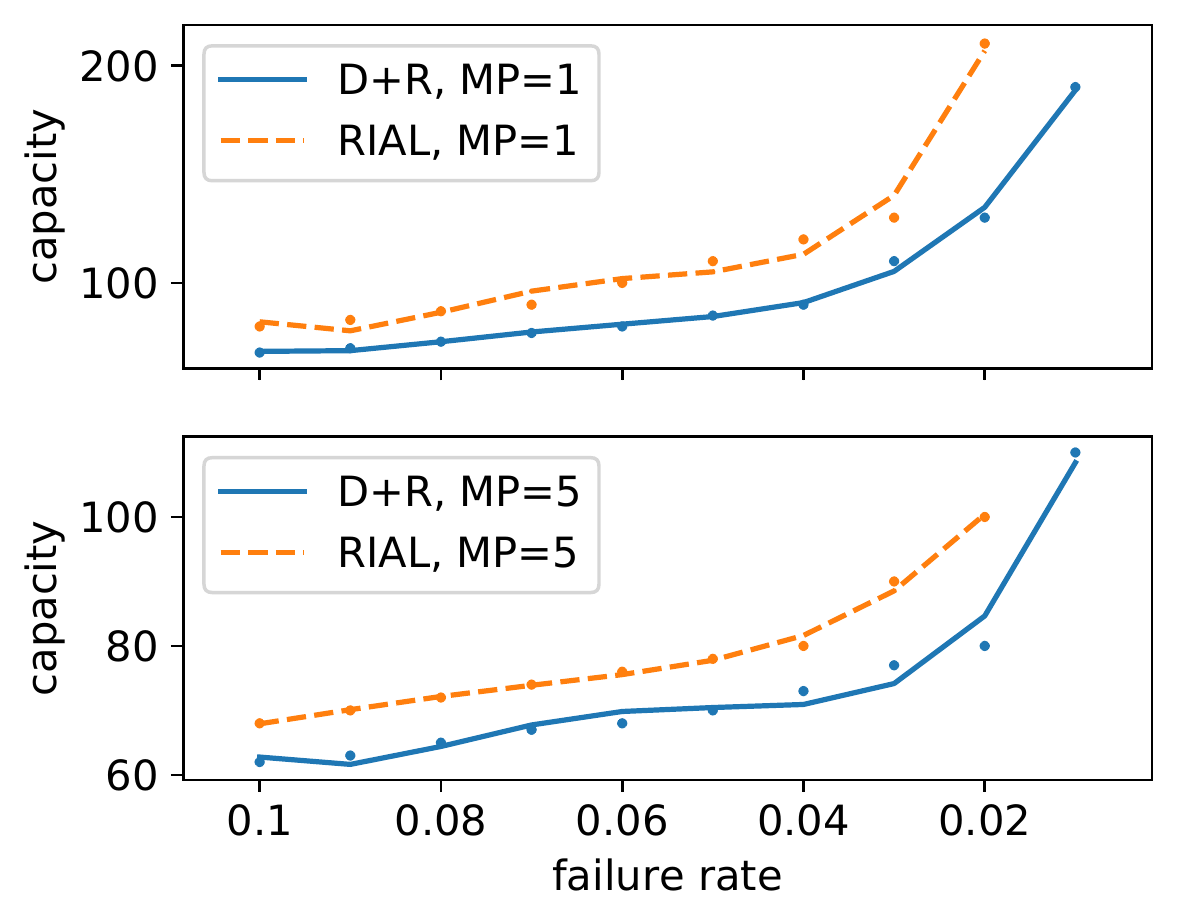}}\hfill
	\subcaptionbox{Rebidding overhead vs capacity, MP=$5$. Overhead reduces by $32\%$ on average. \label{rebiddingbycapa}}{\includegraphics[width=0.265\linewidth]{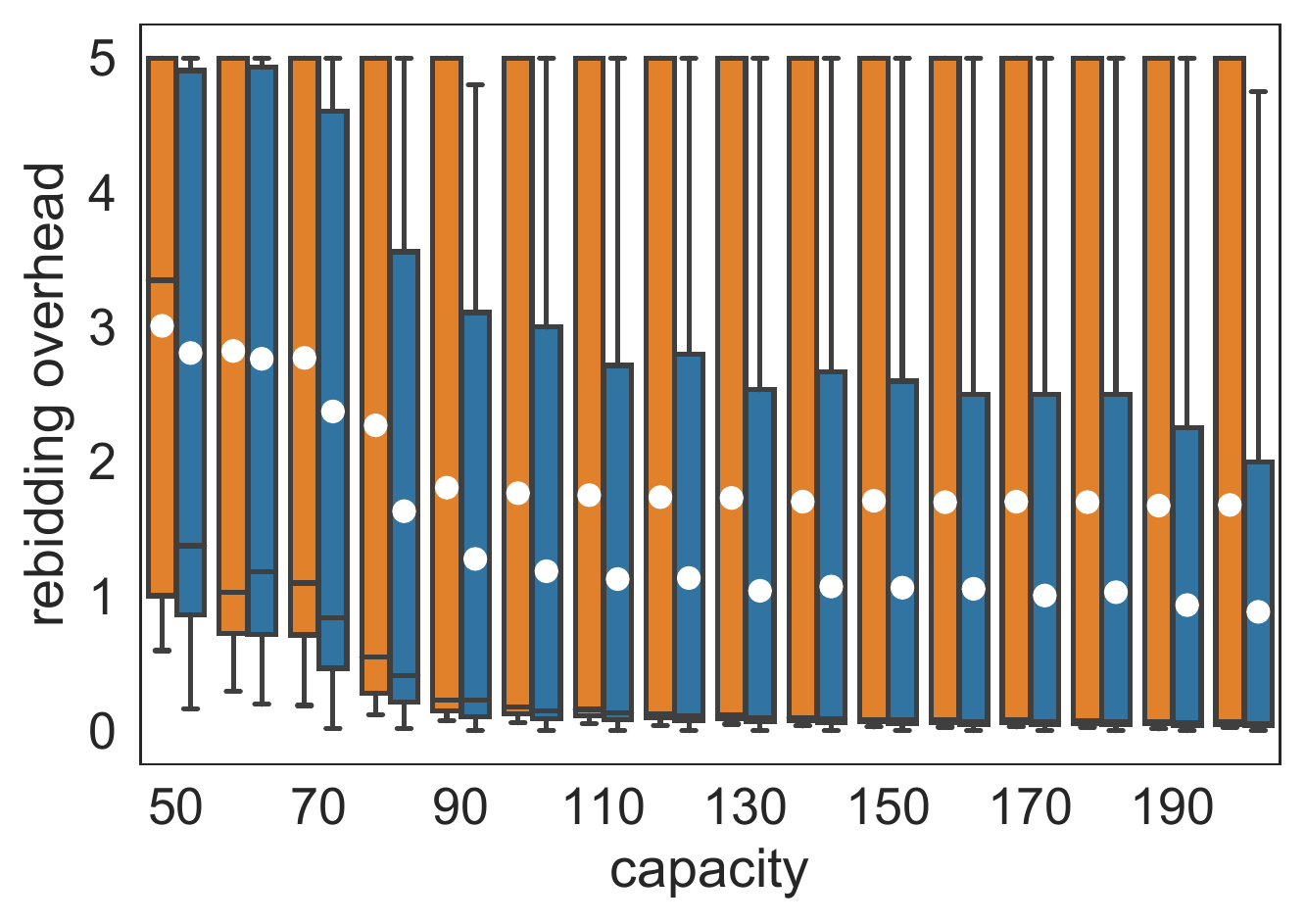}}\hfill
	\subcaptionbox{Resource utilization, MP=$1$. DRACO better utilizes resource by $18\%$ in high contention.\label{utilizationbycapa}}{\includegraphics[width=0.265\linewidth]{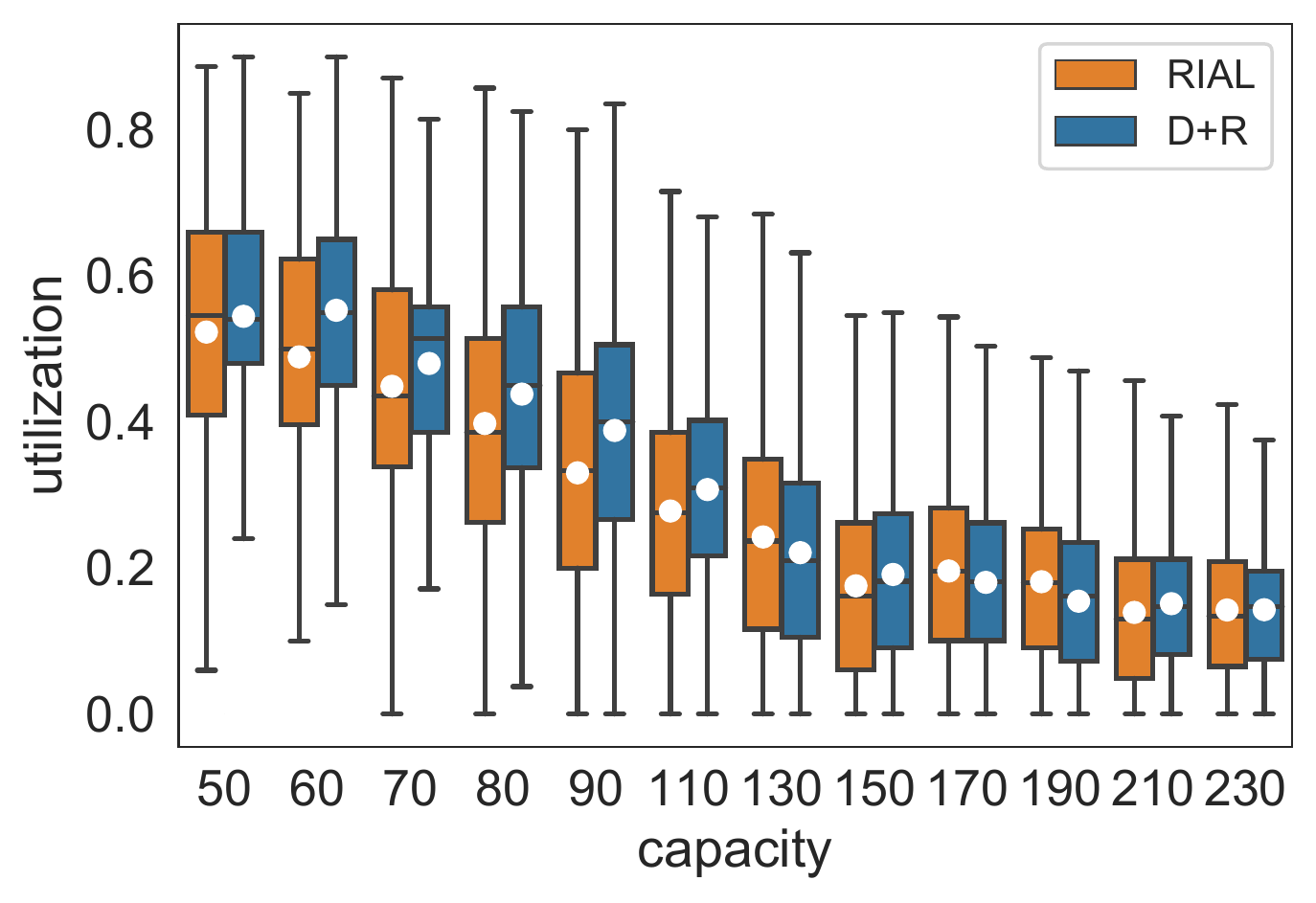}}
	\vspace*{-0.2cm}
	\caption{(a): OFR vs capacity, (b): required capacity to reach OFR$\leq10\%$, (c): rebidding overhead, (d): utilization by capacity}
	\vspace*{-0.8cm}
	\label{performancebycapa}
\end{figure*}

\subsection{Synthetic setup}
\label{subsec:eval_hypo}

In this setup, we cover a wide range of hypothetical scenarios by varying parameters such as system capacity, service/task types and number of rebidding: 
\begin{itemize}
\item Task types by resource needs in time-resource units: F1: 3 units, and F2: 30 units.
\item Service types by deadline and probability: F1, $300$ms: $18.75\%$; F1, $50$ms: $18.75\%$; F2, $300$ms: $6.25\%$; F2, $50$ms: $6.25\%$; F1-F2, $300$ms: $18.75\%$; F1-F2, $50$ms: $18.75\%$; F2-F1, $300$ms: $6.25\%$; F2-F1, $50$ms: $18.75\%$.
\item Service arrival rate per vehicle: randomized according to a two-state Markov modulated Poisson process (MMPP) \cite{wang2013characterizing}, with $\lambda_\text{high} \in (0.48,0.6), \lambda_\text{low} \in (0,0.12)$ and transition probabilities $p_\text{high}=p_\text{low}=0.6$.
\item Capacity: $50$-$230$ resource units.
\item Maximum permitted rebidding: $1$ or $5$ times, respectively.
\item Vehicle count: constant at $30$.
\item Vehicle arrival rate: $0$, always in the system; speed: $0$.
\item Data size: uniform random between $2.4$-$9.6$kbit.
\item Uplink and downlink latency: $0$.
\end{itemize}

Fig.\ref{trainingutilization-left} shows a training example where D+R's OFR is $14\%$ compared to RIAL's $20\%$ at the end of training, or a reduction of $30\%$. The lines are the mean OFR of serveral simulation runs, and the shaded area marks the standard deviation. Fig.\ref{trainingutilization-right} shows where the learning is most useful. We depict the remote site's resource utilization. Since the ACA unit's information of site utilization is delayed, with only RIAL, the site is either over-utilized or starved, in distinctive cycles (dotted line). When the vehicles learn with DRACO, they achieve better utilization (solid line). 

	\begin{figure}[t]
		\centering
		\includegraphics[width=\linewidth]{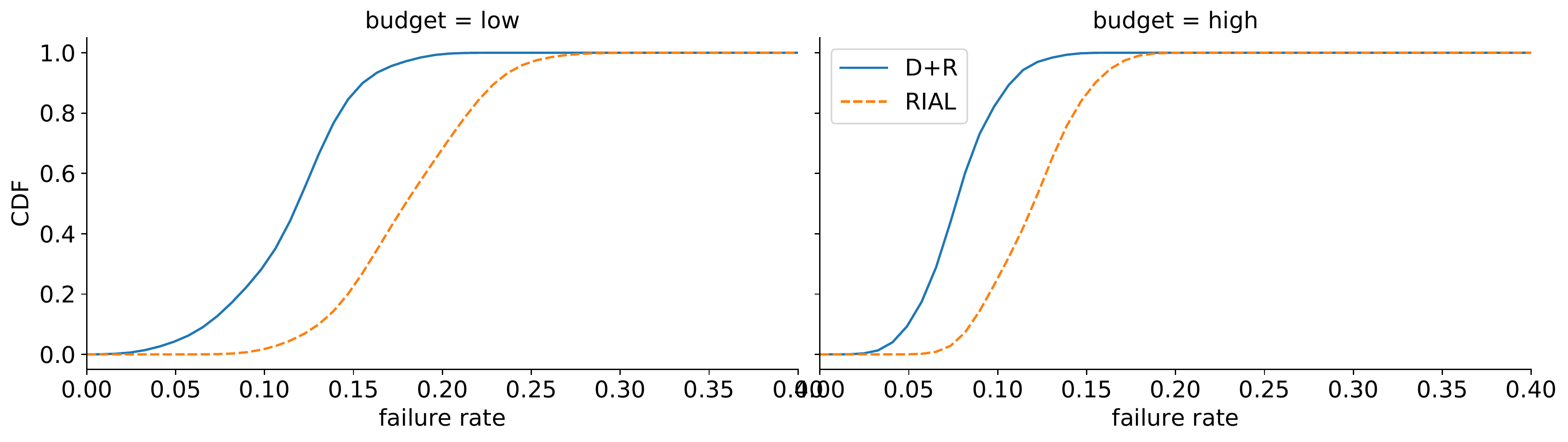}
		\vspace*{-0.2cm}
		\caption{CDF of individual OFRs (capacity=$70$, MP=$1$): DRACO reduces individual OFRs.}
%		\vspace*{-0.2cm}
		\label{cdf}
	\end{figure}

Overall OFR in all parameter settings is shown in Fig.\ref{successbycapa}. Evaluation data is collected from additional evaluation runs after the models are trained, with random incoming service requests newly generated by the MMPP. Besides requests that are not admitted by the ACA unit, the failure rate also includes requests that are admitted, but cannot be executed by the operating side before deadline (reliability). We observe that with D+R, reliability is $99\%$ and consistently higher than with RIAL for all results in the paper. We also observe that DRACO significantly reduces OFR (on average $40\%$ reduction), especially when MP is low. In low contention, i.e., capacity$\geq 100$, by efficient use of resource, D+R achieves the same level of OFR with much less resource (Fig.\ref{capause}). The improvement becomes more significant as OFR decreases. In particular, D+R reaches $1\%$ OFR with much less resource compared to RIAL regardless of MP. 

We also observe that higher MP reduces D+R's advantage over RIAL. This result is to be expected: when more rebidding is permitted, low OFR can be achieved by trial-and-error, limiting the advantage of DRACO's backoff strategy. However, trial-and-error comes with a cost: Fig.\ref{rebiddingbycapa} compares the rebidding overhead used by both algorithms when MP=$5$. In high contention, both active and passive agents leverage on rebidding, and the difference in rebidding overhead is small. D+R's advantage becomes more significant as capacity increases. The box plot shows the median (line), mean (dot), 1st to 3rd quartiles (box), and data range (whiskers). D+R imposes on average $32\%$ lower rebidding overhead. %, i.e., $32\%$ on average and up to $47\%$ of overhead reduction, with CI constantly at $0.2$ rebidding times. 

%$(\text{Q1}-1.5 \cdot \text{IQR}, \text{Q3}+1.5 \cdot \text{IQR})$ (whiskers), where $\text{IQR} = \text{Q3}-\text{Q1}$

To validate our findings in Fig.\ref{trainingutilization-right}, we compare resource utilization under different capacities. Fig.\ref{utilizationbycapa} shows the remote site's utilization when MP=$1$. In high contention, the increase in utilization is up to $18\%$---when capacity is limited, D+R achieves lower OFR through more efficient resource usage. In low contention, capacity is less critical, active and passive agents result in similar utilization. Regardless of capacity level, D+R reduces the standard deviation in utilization by up to $21\%$.% by $8.5\%$ on average and $21\%$ at most.%, with CI of $2$-$3\%$ of resource utilization.

	\begin{figure}[t]
		\centering
		\begin{minipage}{\linewidth}
		\subcaptionbox{Capacity=$50$, MP=$5$, backoff vs. price tradeoff: for all deadlines, vehicles that bid low (high) use long (short) backoff.\label{backoffprice}}{\includegraphics[width=0.5\linewidth]{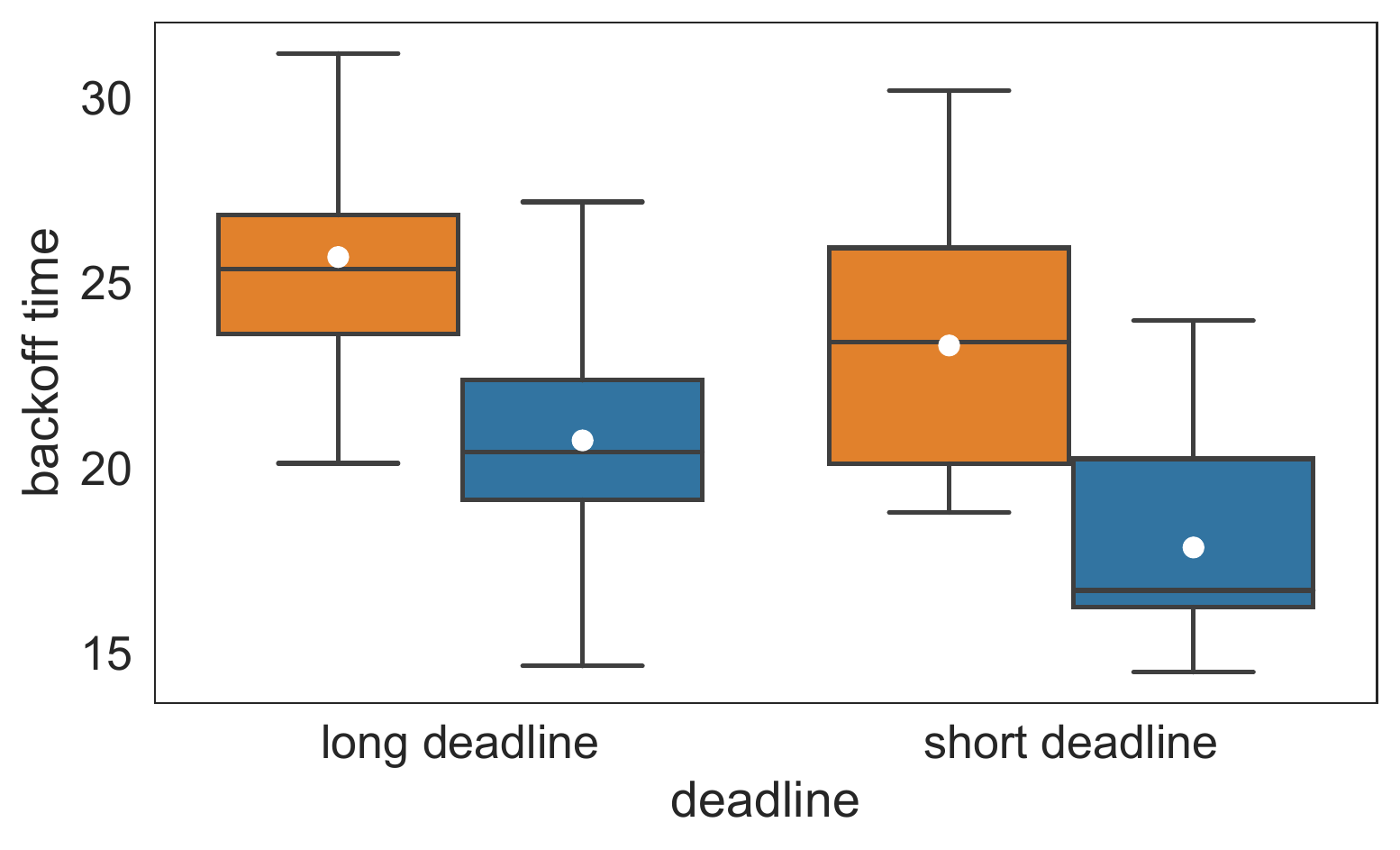}}\hfill
		\subcaptionbox{MP=$5$, long deadline, high contention: backoff time decreases with higher capacity, but the tradeoff with price remains.\label{backoffpricebycapa}}{\includegraphics[width=0.5\linewidth]{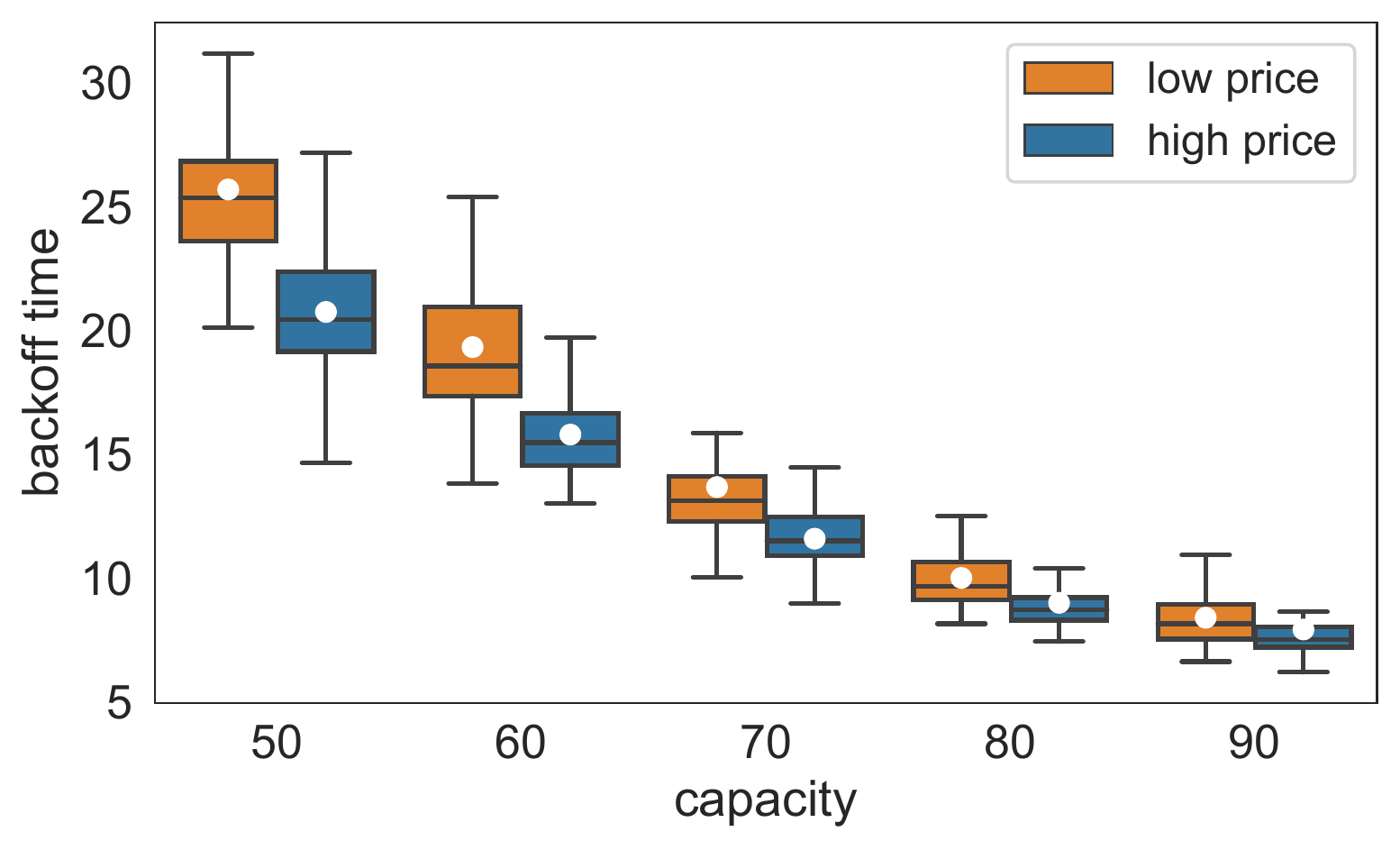}}
		\end{minipage}
		\vspace*{-0.2cm}
		\caption{Backoff and price tradeoff}
		\vspace*{-0.2cm}
		\label{backoff}
	\end{figure}

Fig.\ref{cdf} shows the cumulative probability of vehicles' individual OFRs. With DRACO, as system overall OFR reduces, the individual OFRs reduce accordingly: the auction does not cause disadvantage to individual vehicles. Moreover, vehicles with lower budget improve by a greater margin: they learn to utilize backoff mechanism to overcome their disadvantage in initial parameterization. Fig.\ref{backoffprice} shows how vehicles learn to trade off between bidding price and backoff time. They are separated into two groups: a vehicle is in the ``low price'' group if it bids on average lower than the average bidding price of all vehicles; otherwise, it is in the ``high price'' group (here we analyze actual bidding prices instead of the predefined budgets). When service requests have a longer deadline, vehicles in both price groups learn to utilize longer backoff. Regardless of the service request deadline, ``low price'' vehicles always use longer backoff in their bidding decisions, compared to the ``high price'' group. Fig.\ref{backoffpricebycapa} demonstrates the tradeoff effect with increasing capacity. As capacity increases, backoff time decreases, but the tradeoff is present in all cases. 

To summarize: Fig.\ref{trainingutilization} and \ref{performancebycapa} demonstrate DRACO's excellent overall system performance. More importantly, Fig.\ref{cdf} shows that system objective is aligned with individual objectives through incentivization (\textbf{C1}), and Fig.\ref{backoff} demonstrates where our approach fundamentally differs from previous approaches: differently initialized agents learn to select the most advantageous strategy based on limited feedback signal (\textbf{C2}). The capability to learn and behave accordingly makes our agents highly flexible in a dynamic environment. Finally, Fig.\ref{trainingutilization-left} shows good convergence speed despite computation and communication complexity of the problem (\textbf{C3}).

\subsection{Realistic setup}
\label{subsec:eval_real}

	\begin{figure*}[t]
		\centering
		\begin{minipage}{\linewidth}
		\subcaptionbox{Training env.: low contention with abundant resource, traffic phase=$10$-$40$s, low vehicle speed($10$km/h), low arrival rate=($1/2.2\text{s}$), low variation in vehicle count($22$-$30$): OFR in training(left) and evaluation(right). \label{general-train}}
			 			{  \includegraphics[width=0.22\linewidth]{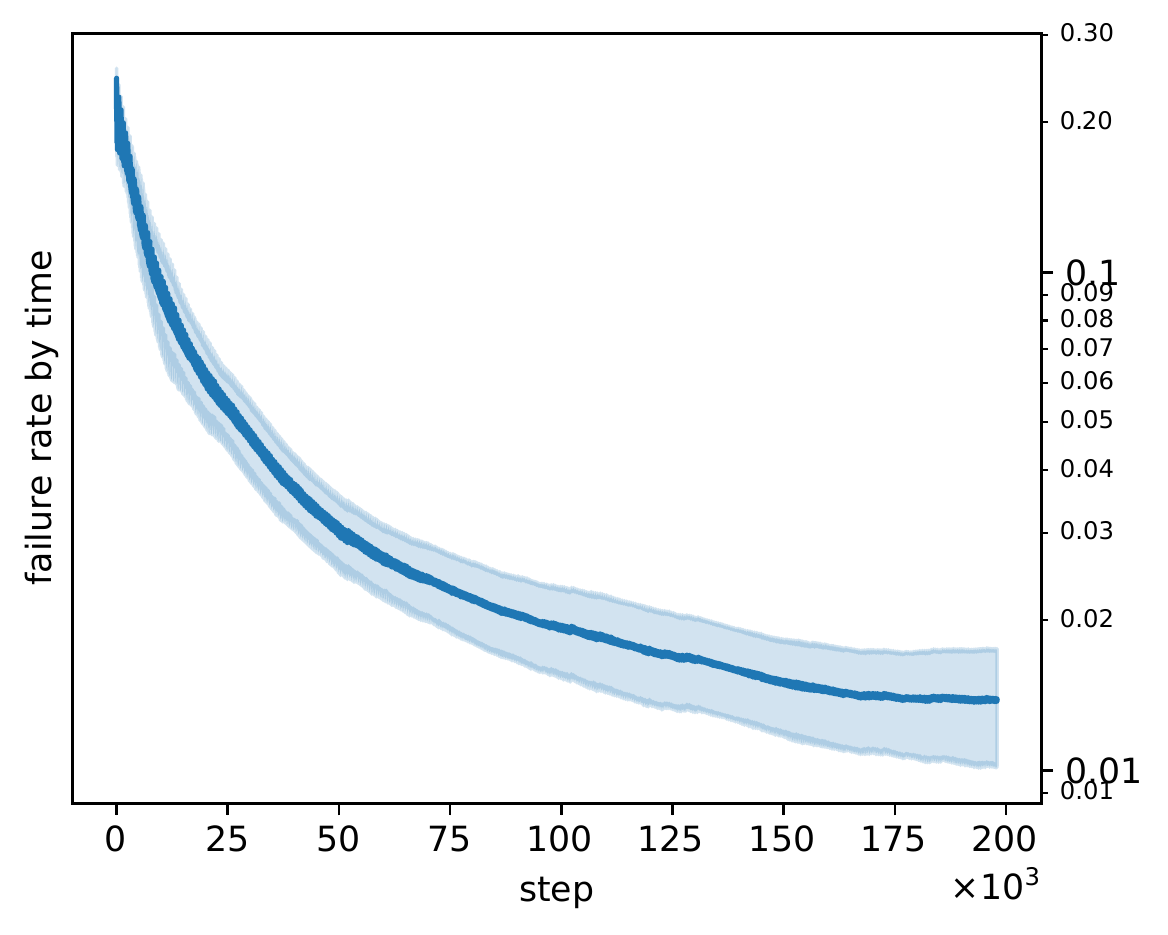}
						    \includegraphics[width=0.22\linewidth]{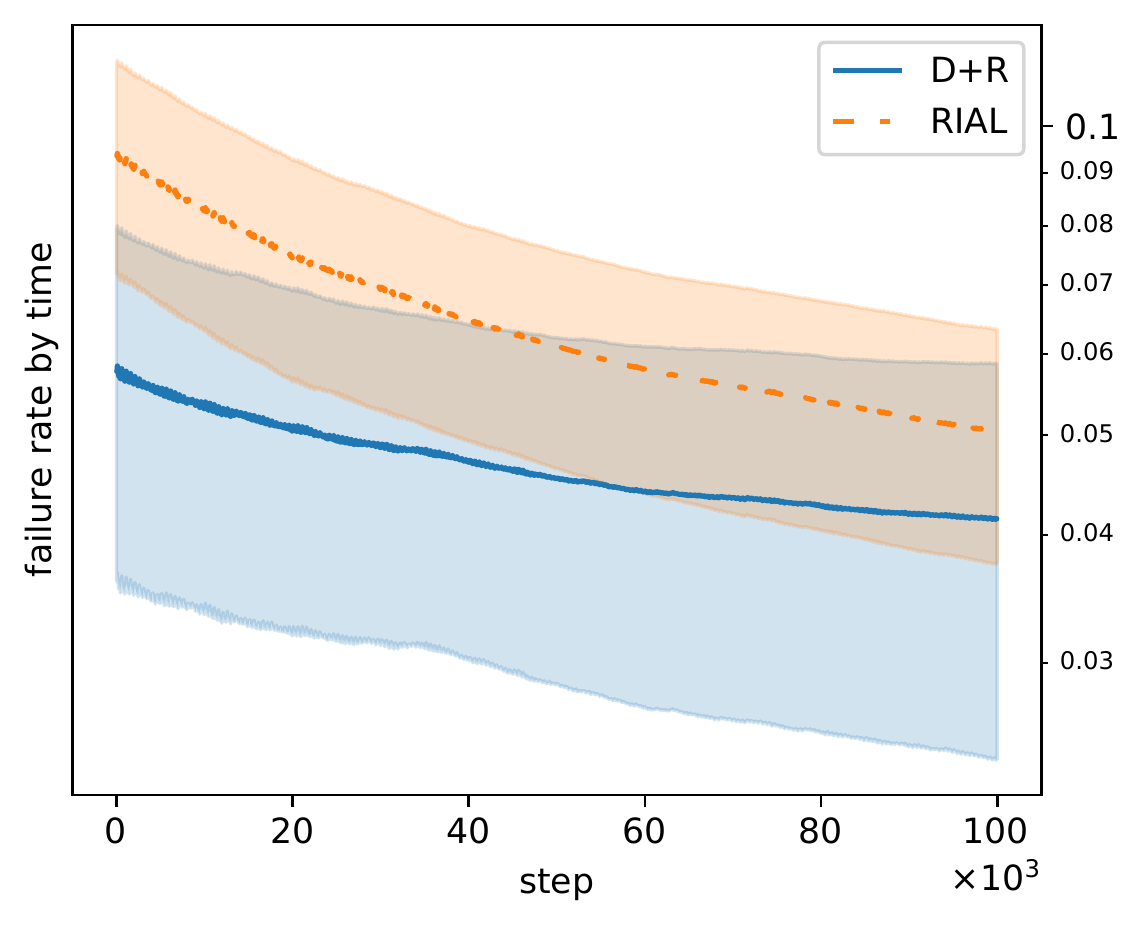}
						}\hfill
		\subcaptionbox{Test env.: high contention with limited resource, traffic phase=$20$s, high vehicle speed($30$km/h), high arrival rate($1/1\text{s}$), high variation in vehicle count($14$-$30$): vehicle count over time(left) and OFR(right). \label{general-test2}}
						{   \includegraphics[width=0.22\linewidth]{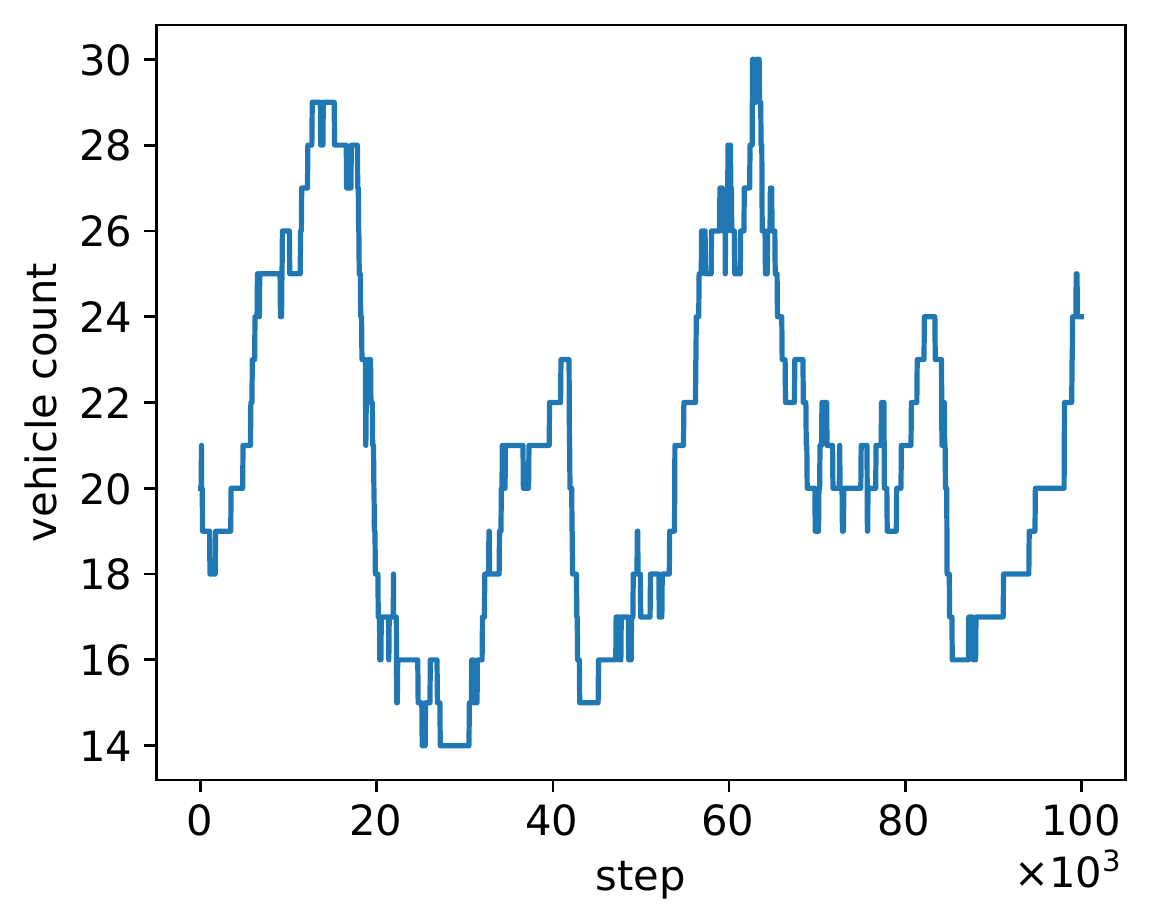}
						    \includegraphics[width=0.22\linewidth]{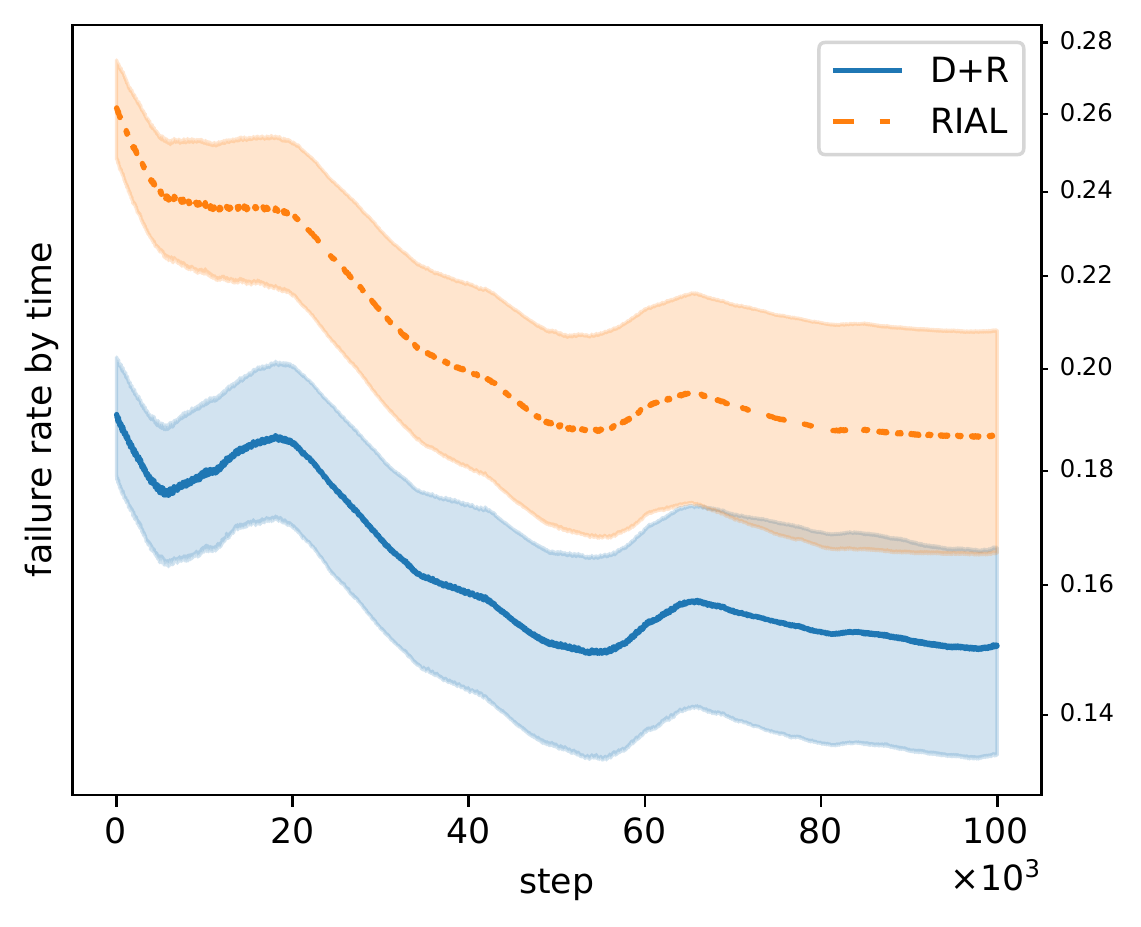}
						}\hfill
		\end{minipage}
		\vspace*{-0.2cm}
		\caption{Offloading failure rate (OFR) in training and test environments}
		\vspace*{-0.8cm}
		\label{generalization}
	\end{figure*}

In this setup, we adopt the data patterns of segmentation and motion planning applications extracted from various self-driving data projects \cite{cordts2016cityscapes} or referenced from relevant studies \cite{chen2017importance}\cite{broggi2014proud}. We also use Simulation of Urban Mobility (SUMO) \cite{behrisch2011sumo} to create a more realistic mobility model of a single junction with a centered traffic light; the junction is an area downloaded from open street map. Assuming 802.11ac protocol, we place the ACA unit in the middle of the graph and limit the edges to within 65m of the ACA. The net is with two lanes per street per direction, SUMO uniform-randomly creates a vehicle at any one of the four edges. 

%\begin{figure}[t]
%	\centering
%	\includegraphics[width=0.8\linewidth]{}\hfill
%	\caption{Some autonomous driving applications}
%%	\vspace{-0.2cm}
%	\label{fig:datapattern}
%\end{figure}

Parameters of the setup are as follows \cite{cordts2016cityscapes,chen2017importance,broggi2014proud}: 
\begin{itemize}
\item Task types: F1: $80$ units, and F2: $80$ units.
\item Service types and deadline: F1: $100$ms and F2: $500$ms. 
\item Service arrival rate per vehicle: fixed at F1: every $100$ms, and F2: every $500$ms. 
\item Capacity: $20$ in high contention, $30$ in low contention.
\item Maximum permitted rebidding: $1$.
\item Vehicle count: $14$-$30$ from simulated trace data.
\item Vehicle arrival rate: constantly at $1$ every $1$ or $2.2$ seconds; speed: $10$ or $30$ km/h when driving.
\item Data size: uplink: F1: $0.4$Mbit, F2: $4$Mbit. Downlink: F1: $0$ (negligible), F2: $0.4$Mbit.
\item Latency: we take 802.11ac protocol that covers a radius of 65 meters, and assume maximum channel width of ca. $1.69$ Gbps. We model the throughput as a function of distance to the ACA unit: throughput=$-26 \times \text{distance} + 1690$ Mbps \cite{shah2015throughput}. If there are $N$ vehicles transmitting data to the ACA unit, we assume that each gets $1/N$ of the maximum throughput at that distance.
\end{itemize}

For training, we set the traffic light phases to $10$-$40$s of green for each direction, alternatively. We train our active agents with DRACO in low contention. Fig.\ref{general-train}-left shows convergence to OFR of 2\%. Then we evaluate the trained models in the same environment with newly simulated trace data from SUMO, our approach still reaches OFR of $4$\% and outperforms RIAL (Fig.\ref{general-train}-right). All simulations are repeated several times to take randomness into account.

Finally, we test our trained models in a significantly different environment, changing traffic light phases, vehicle arrival rate and speed to make the environment more volatile and dynamic, and reducing capacity to create a high-contention situation. The resulting vehicle count over time (Fig.\ref{general-test2}-left) shows a much heavier and more frequent fluctuation compared to the training environment. Note that vehicle count and OFR do not vary synchronously---OFR is determined by vehicle count and numerous other complicating factors such as transmission, queueing and processing time, past utilization, etc. Despite the significant changes to the environment, D+R still outperforms RIAL, reaching low OFRs in high contention without requiring any further training (Fig.\ref{general-test2}-right). It shows that DRACO has very good generalization properties---in fact, in the more volatile and dynamic environment, the superiority of active agents becomes more obvious.

\section{Related Work}
\label{sec:related}

Centralized approaches such as \cite{kuo2018deploying, agarwal2018joint} for resource allocation, and \cite{lyu2016multiuser, choo2018optimal, chen2018task} for offloading, are suited to core-network and data-center applications, when powerful central ACA can be set up, and data can be relatively easily obtained. They are not the focus of our study. 

Previous studies of decentralized systems address some of the issues in centralized approaches. \cite{blocher2020letting, stefan2021tnsm} propose distributed runtime algorithm to optimize system goals, but disregard user preferences. \cite{kumar2014bayesian, kumar2015coalition,chen2015efficient} only consider cooperative resource-sharing or offloading. \cite{chen2017mechanism,chen2014decentralized, cardellini2016game} require complete information to compute the desired outcome. \cite{shams2014energy} only considers discrete actions. \cite{li2019learning} learns with partial information, but it reduces complexity by assuming single service type and arrival rate. Our approach also differs from \cite{khaledi2016optimal} as we consider a multi-dimensional continuous action space with multiple service types, and both cooperative and competitive behaviors.

Besides the previously mentioned decentralized learning algorithms \cite{chang2007no, weinberg2004best, bowling2002multiagent, heinrich2015fictitious} for a dynamic environment, independent learner methods~\cite{tan1993multi} 
%\cite{hernandez2017survey} 
are used to reduce modeling and computation complexity, but they fail to guarantee equilibrium \cite{yang2018mean}, and have overfitting problems \cite{lanctot2017unified}. Finally, federated learning \cite{mcmahan2017fl} is not applicable, as it provides a logically centralized learning framework.

\section{Conclusion}
\label{sec:conclusion}

Our algorithm learns how to best utilize backoff option based on its initialization parameters. As a result, the algorithm achieves significant performance gains and very good generalization properties. Our interaction mechanism aligns private and system goals without sacrificing either user autonomy or system-wide resource efficiency, despite the distributed design with limited information-sharing.

We assume there is no ``malicious'' agent with the goal to reduce social welfare or attack the system. In general, agents with heterogeneous goals is left to future research. In V2X, all devices are potential computing sites; offloading between any devices should be considered. Long-term effect of decisions---e.g.\ if unused budget can be saved for the future, is also an interesting topic. How initialization and predefined parameters affect agent behavior and the algorithm's convergence property, needs to be studied in detail.

\bibliographystyle{IEEEtran}
\appendices
\section{Proof of potential game}
\label{appendix:potentialGame}

\begin{proof} 

We define player $i$'s utility as $u_i(\alpha_i,\alpha_{-i})=\sum\limits_{k \in K}q_{i,k}-\sum\limits_{k \in K}\alpha_{i,k} q_{i,k} + W \Big(1-\frac{\sum_j \alpha_j \cdot \omega_j}{C}\Big)$, where $\omega_j \in \mathbb{R}^K$ is the resource requirement of each commodity, $C$ is the system capacity. 

We define potential function: $\phi(\alpha_i,\alpha_{-i})=\sum\limits_{j \in I, k \in K}q_{j,k}-\sum\limits_{j \in I, k \in K} \alpha_{j,k}q_{j,k}+W\Big(1-\frac{\sum_j \alpha_j \cdot \omega_j}{C} \Big)$. 

To simplify, we substitute with $Q_i=\sum\limits_{k\in K}q_{i,k}$, $A_i=\sum\limits_{k \in K}\alpha_{i,k} q_{i,k}$, $A_{-i} = \sum\limits_{j \in I, j \neq i, k \in K}\alpha_{j,k} q_{j,k}$, $B_i=\sum\limits_k \alpha_{i,k} \omega_{i,k}$, $B_{-i}=\sum\limits_{j \in I, j \neq i, k \in K}\alpha_{j,k} \omega_{j,k}$, and rewrite: $u_i(\alpha_i,\alpha_{-i})=Q_i-A_i+W-\frac{W}{C}(B_i+B_{-i})$ and $u_i(\alpha'_i,\alpha_{-i})=Q_i-A'_i+W-\frac{W}{C}(B'_i+B_{-i})$; hence, $\phi(\alpha_i,\alpha_{-i}) = \sum\limits_j Q_j-(A_i+A_{-i})+W-\frac{W(B_i+B_{-i})}{C}$, $\phi(\alpha'_i,\alpha_{-i}) = \sum\limits_j Q_j-(A'_i+A_{-i})+W-\frac{W(B'_i+B_{-i}) }{C}$, which implies $u_i(\alpha_i,\alpha_{-i})-u_i(\alpha'_i,\alpha_{-i}) =-(A_i-A'_i)-\frac{W}{C}(B_i-B'_i) =\phi(\alpha_i,\alpha_{-i})-\phi(\alpha'_i,\alpha_{-i})$. %Since $\alpha_i \in \mathbb R^{|K|}$, the game under low contention is a finite potential game.
\end{proof}

\section{Second-price auction}
\label{appendix:SPAwithpenalty}
Under high contention, as defined in Sec.\ref{payment}, $u_i$ is reduced to: \begin{flalign}\label{eq:ui}
u_i= \sum\limits_{k \in K} \Big(x_{i,k} \cdot (v_{i,k}-p_{i,k})-(1-x_{i,k}) \cdot c_{i,k} \Big)
\end{flalign}

We prove the theorem for $|M|=2$ and $|K|=1$, extension to other settings is straightforward. Our proof is an extension from \cite{sun2006wireless}. Unlike \cite{sun2006wireless}, we include in utility the second-price payment and cost for losing a bid. Based on \cite{sun2006wireless}, it can also be extended to multiple bidders. %We also prove more formally the conditions to apply Kakutani fixed point theorem, as an extention to \cite{sun2006wireless}.

$2$ bidders receive continuously distributed valuations $v_i \in [l_i,m_i], i \in\{1,2\}$ for $1$ commodity, and choose their strategies $f_1(v_1),f_2(v_2)$ from the strategy sets $F_1$ and $F_2$. The resulting NE strategy pair is $(f_1^*, f_2^*)$. Any strategy function $f(v)$ is increasing in $v$, with $f_1(l_1)=a$, and $f_1(m_1)=b$. We assume users have budgets $(B_1,B_2)$, and that they cannot bid more than the budget. We define cost for losing the bid $c_i$.

We formulate the problem into a utility maximization problem: $\max\limits_{f_2 \in S_2(f_1)} u_2(f_1,f_2)$. We say $f_2$ is a best response of bidder 2, if $u_2(f_1,f_2)\geq u_2(f_1,f_2')$, $\forall f_2' \in S_2(f_1)$. A NE strategy pair $(f_1^*, f_2^*)$ has the strategies as each other's best responses. 

\begin{thm}\label{thm:bestResp} Given bidder 1's bidding strategy $f_1 \in F_1,f_1(l_1)=a_1,f_1(m_1)=b_1$, bidder 2's best response has the form $\begin{cases}
f_2(v_2) \leq a_1 & \text{for } v_2 \in [l_2,\theta_1] \\
f_2(v_2) = j_2 \cdot v_2 + d_2& \text{for } v_2 \in [\theta_1,\theta_2] \\
f_2(v_2) \geq b_1 & \text{for } v_2 \in [\theta_2,m_2]
\end{cases}$, where $\theta_1, \theta_2 \in [l_2,m_2]$ and $j_2 \theta_1 + d_2 =a_1, j_2\theta_2 + d_2 =b_1$.
\end{thm}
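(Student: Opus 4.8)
The plan is to fix bidder~1's strategy $f_1$, increasing with $f_1(l_1)=a_1$ and $f_1(m_1)=b_1$, and solve bidder~2's one-dimensional problem $\max_{b\in[0,B_2]} U_2(b;v_2)$, where $U_2$ is bidder~2's expected utility under valuation $v_2$ and $b$ is its bid. Since $f_1$ is increasing and $v_1$ has a continuous distribution, the bid $f_1(v_1)$ has a CDF $G_1$ supported on $[a_1,b_1]$. The first step is to record the three win regimes: for $b<a_1$ bidder~2 surely loses, for $b>b_1$ it surely wins and pays $\mathbb{E}[f_1(v_1)]$ irrespective of $b$, and for $a_1\le b\le b_1$ it wins exactly when $f_1(v_1)<b$ and then pays $f_1(v_1)$. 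Plugging the second-price payment and the loss penalty $c_2$ into $u_2$ of Eq.~\ref{eq:ui} gives $U_2(b;v_2)=-c_2$ on $[0,a_1)$, $\;U_2(b;v_2)=v_2-\mathbb{E}[f_1(v_1)]$ on $(b_1,B_2]$ when $b_1\le B_2$, and $U_2(b;v_2)=v_2\,G_1(b)-\int_{a_1}^{b}t\,dG_1(t)-c_2\bigl(1-G_1(b)\bigr)$ on $[a_1,b_1]$.

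The heart of the argument is the first-order analysis of the middle piece: differentiating yields $\partial U_2/\partial b=g_1(b)\,(v_2+c_2-b)$ (with the obvious modification at atoms of $G_1$), so $U_2(\cdot\,;v_2)$ is quasi-concave and strictly single-crossing in $b$ with turning point $b=v_2+c_2$ wherever $g_1>0$. From this single-crossing property I would read off the trichotomy in $v_2$: (i) if $v_2\le\theta_1:=a_1-c_2$, then $U_2$ is non-increasing on $[a_1,b_1]$, so any bid $\le a_1$ (a sure loss, value $-c_2$) is optimal; (ii) if $v_2\ge\theta_2:=b_1-c_2$, then $U_2$ is non-decreasing on $[a_1,b_1]$, and using $G_1(b_1)=1$ to see continuity at $b_1$, any bid $\ge b_1$ (a sure win) is optimal; (iii) if $\theta_1<v_2<\theta_2$, the interior maximizer $b^{\star}=v_2+c_2\in(a_1,b_1)$ is the global maximizer by quasi-concavity, so $f_2(v_2)=v_2+c_2$. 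That is exactly the claimed piecewise form, here with slope $j_2=1$ and intercept $d_2=c_2$, and the junction conditions $j_2\theta_1+d_2=a_1$ and $j_2\theta_2+d_2=b_1$ then hold by direct substitution. I would also confirm that in regime~(iii) the interior bid beats both boundary regimes, by comparing $U_2(b^{\star};v_2)$ with its values on $[0,a_1)$ and $(b_1,B_2]$; this follows from $U_2$ rising up to $b^{\star}$ and falling afterwards, together with continuity at the endpoints $a_1$ and $b_1$.

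The remaining work is boundary bookkeeping, which I expect to be the main obstacle rather than the economic core. If the budget binds ($b_1>B_2$) the "sure-win" region is replaced by the cap $b=B_2$, which merely truncates the linear middle piece and shifts the breakpoint $\theta_2$; the sure-loss/linear/sure-win-or-capped structure survives. If $[\theta_1,\theta_2]$ is not contained in the valuation support $[l_2,m_2]$, the corresponding end regime is empty and the relevant $\theta$ is clipped to $l_2$ or $m_2$. And if $f_1$ is only weakly increasing, $G_1$ may have flat stretches (on which $U_2$ is affine, so the argmax is still governed by where $v_2+c_2$ falls) or atoms (introducing jumps in $U_2$ that the single-crossing comparison still resolves); assuming $f_1$ strictly increasing and continuous removes these cases. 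Making the quasi-concavity and endpoint comparisons airtight for an arbitrary induced $G_1$, and tracking how the budget and the support $[l_2,m_2]$ interact with $[\theta_1,\theta_2]$, is the delicate part, whereas the single-crossing condition $v_2+c_2=b$ is a short derivative computation. Finally, applying this best-response characterization symmetrically to both bidders and intersecting the two linear-form correspondences is what will yield the linear-form Nash equilibrium asserted in Theorem~\ref{thm:spa}.
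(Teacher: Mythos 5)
Your argument is sound, and it reaches the claimed piecewise form; the caveats you flag (budget capping, clipping $\theta_1,\theta_2$ to $[l_2,m_2]$, atoms/flat stretches of $G_1$ when $f_1$ is only weakly increasing) are exactly the conditions the paper's statement implicitly assumes away, so flagging and handling them as you do is acceptable. A precise line-by-line comparison is not possible because the paper does not print its proof of this theorem: it defers the details to supplemental material and describes the argument as an extension of the linear-best-response derivation in \cite{sun2006wireless}, modified to include the second-price payment and the lost-bid penalty. Your route is genuinely more direct: by exploiting the second-price structure (your own bid never affects the price you pay, only whether you win), the first-order condition $g_1(b)(v_2+c_2-b)$ is really a weak-dominance statement, and the interior best response is pinned down exactly as $f_2(v_2)=v_2+c_2$, i.e.\ $j_2=1$, $d_2=c_2$, with $\theta_1=a_1-c_2$, $\theta_2=b_1-c_2$ --- a sharper conclusion of which the theorem's generic linear form is a relaxation, and one that makes the dependence on $f_1$ enter only through the thresholds. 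What each approach buys: following \cite{sun2006wireless} keeps the argument in the same template as that paper, so its subsequent NE-existence and welfare-maximization steps (and the generic $j_i,d_i$ used later in the Pareto-optimality argument) carry over with minimal change and accommodate variants (e.g.\ binding budgets) where the slope need not be $1$; your dominance/single-crossing derivation is shorter, makes NE existence in Theorem~\ref{thm:spa} nearly immediate (penalty-shifted truthful bidding, capped at the budget, is a mutual best response), but the burden shifts to the boundary bookkeeping you identify --- in particular the case $b_1>B_2$, where the third regime of the stated trichotomy must be replaced by the capped bid --- which must be made rigorous for the result to hold verbatim in the paper's budgeted setting.
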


Theorem \ref{thm:bestResp} implies that the best response of bidder $1$ and $2$ are both of the linear form. Using the new best responses function, we similarly extend the proof of the NE outcome and welfare maximization to suit our case. Detailed proof is provided in supplemental meterial \cite{dracosource}.

\section{Pareto optimality}
\label{appendix:paretoOptimal}

Valuation of the service request is a linear function of the resource needed: $v_1=g_1 \omega_1 + k_1,v_2=g_2 \omega_2+k_2$, $g,k$ are constants, $\omega$ is amount of resource required. The allocation rule under NE is: $A^*_{v_1,v_2}=1 \text{, if } j_1 v_1 + d_1 \geq j_2 v_2 + d_2 \text{, otherwise } 2$. Form of the condition is from best response form in appendix Sec.\ref{appendix:SPAwithpenalty}. We also assume that both bidders have at least some access to the resources, as a form of fairness. We define the fairness constraint: $\mathbb{E}[\omega_1|_{A_{v1,v2}=1}] / \mathbb{E}[\omega_2|_{A_{v1,v2}=2}]=\gamma \in \mathbb R_{>0}$.

\begin{thm}
The allocation $A^*_{v_1,v_2}$ maximizes overall resource allocation $\omega_1+\omega_2$, subject to the fairness constraint, when the valuations are linear functions of resources. Or, the NE of the game achieves optimal resource allocation.
\end{thm}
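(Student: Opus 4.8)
The plan is to show that any other feasible allocation rule $A$ satisfying the same fairness constraint cannot yield a larger expected total resource $\mathbb{E}[\omega_1 + \omega_2]$ than the NE rule $A^*$. First I would set up the optimization: we are choosing, for each realized valuation pair $(v_1,v_2)$ (equivalently, each resource pair $(\omega_1,\omega_2)$, via the linear maps $v_i = g_i\omega_i + k_i$), which bidder $\in\{1,2\}$ receives the single commodity, so an allocation rule is just a measurable set $\mathcal{A}_1 \subseteq [l_1,m_1]\times[l_2,m_2]$ on which bidder $1$ wins (and its complement $\mathcal{A}_2$ on which bidder $2$ wins). The objective is $\int_{\mathcal{A}_1}\omega_1\, dP + \int_{\mathcal{A}_2}\omega_2\, dP$, and the fairness constraint fixes the ratio $\mathbb{E}[\omega_1 \mid \mathcal{A}_1]/\mathbb{E}[\omega_2 \mid \mathcal{A}_2] = \gamma$.

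Next I would argue via an exchange / Lagrangian argument. Introduce a multiplier $\mu$ for the fairness constraint and consider the pointwise-optimal rule: award the commodity to bidder $1$ at $(\omega_1,\omega_2)$ iff the marginal contribution $\omega_1$ (adjusted by the shadow price of the constraint) exceeds that of bidder $2$, i.e. iff $\omega_1 - \mu\,\omega_1 \ge \omega_2 + \mu\,\omega_2$ up to the appropriate sign bookkeeping, which rearranges to a threshold of the form ``$1$ wins iff $\omega_1 \ge \lambda\,\omega_2$'' for some constant $\lambda = \lambda(\mu) > 0$. Substituting the linear relations $\omega_i = (v_i - k_i)/g_i$ turns ``$\omega_1 \ge \lambda\,\omega_2$'' into a condition of exactly the form $j_1 v_1 + d_1 \ge j_2 v_2 + d_2$ — which, by Theorem~\ref{thm:spa} and appendix Sec.\ref{appendix:SPAwithpenalty}, is precisely the NE allocation rule $A^*_{v_1,v_2}$. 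The remaining step is to choose $\mu$ (hence $\lambda$) so that this pointwise rule meets the fairness ratio $\gamma$: I would note that $\mathbb{E}[\omega_1\mid \text{1 wins}]/\mathbb{E}[\omega_2\mid \text{2 wins}]$ varies monotonically and continuously in $\lambda$ from one extreme (everything to bidder $2$) to the other (everything to bidder $1$), so by the intermediate value theorem there is a $\lambda$ realizing $\gamma$, and that $\lambda$ reproduces $A^*$. Hence $A^*$ is the constrained maximizer, proving the claim, and since $A^*$ is also the NE allocation, the NE achieves optimal resource allocation.

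The main obstacle I anticipate is the constraint-handling step: showing rigorously that the Lagrangian relaxation is tight — i.e., that the pointwise/threshold rule derived from the first-order conditions is genuinely optimal for the \emph{constrained} problem and not merely a stationary point — and that the monotonicity in $\lambda$ needed for the IVT argument actually holds for the given valuation distributions (it requires mild regularity, essentially that the joint density has full support on the rectangle, which matches the ``positive probability of winning'' hypothesis). A secondary subtlety is matching signs and constants so that the threshold $\omega_1 \ge \lambda\omega_2$ maps under the affine change of variables onto the \emph{specific} coefficients $(j_1,d_1,j_2,d_2)$ produced by the NE analysis, rather than merely onto \emph{some} linear condition; this is a bookkeeping check using $v_i = g_i\omega_i + k_i$ and the linearity of $v_i$ in $\omega_i$ assumed in the corollary's hypothesis, but it must be done carefully to genuinely identify the optimizer with $A^*$ rather than with a different linear rule.
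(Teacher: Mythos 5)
Your overall route is the paper's route: treat the problem as a constrained maximization of expected allocated resource, relax the fairness constraint with a multiplier, observe that the pointwise-optimal rule is a threshold comparison in $\omega$-space of the form $\omega_1(1+\lambda) \ge \omega_2(1-\gamma\lambda)$, identify that threshold with the linear NE rule $j_1v_1+d_1 \ge j_2v_2+d_2$ via the affine change of variables, and defer the Lagrangian-sufficiency argument (your ``tightness'' worry) to the same source the paper leans on \cite{sun2006wireless}. So the skeleton matches.

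The genuine gap is the step you dismiss as a ``secondary subtlety'': the coefficient matching is the crux, and as proposed it does not go through. For a generic linear valuation $v_i = g_i\omega_i + k_i$, the NE rule written in $\omega$-coordinates is the \emph{affine} condition $j_1g_1\omega_1 + (j_1k_1+d_1) \ge j_2g_2\omega_2 + (j_2k_2+d_2)$, whereas the Lagrangian pointwise rule is the \emph{homogeneous} comparison $\omega_1(1+\lambda) \ge \omega_2(1-\gamma\lambda)$; the two coincide only under the specific relations $k_1=-d_1/j_1$, $k_2=-d_2/j_2$, $g_1j_1 = 1+\lambda^*$, $g_2j_2 = 1-\gamma\lambda^*$. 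The paper does not verify these for given valuations --- it \emph{defines} $g_i,k_i$ that way (and fixes $\lambda^*$ directly as the multiplier value at which the NE allocation meets the fairness constraint, with no intermediate-value argument), so the identification holds by construction, at the cost of the theorem really applying to that engineered valuation--resource relationship rather than to arbitrary linear valuations. In your forward version, the IVT only hands you \emph{some} fairness-feasible threshold rule $\omega_1 \ge \lambda\omega_2$; nothing forces it to equal $A^*$ for the given $(g_i,k_i,j_i,d_i)$, and indeed it will not whenever the intercepts $j_ik_i+d_i$ fail to cancel. To close the argument you must either adopt the paper's construction (state the coefficient relations as hypotheses or definitions) or prove separately that the equilibrium coefficients $(j_i,d_i)$ arising from Theorem \ref{thm:bestResp} automatically satisfy them, which is not established anywhere and is not obvious.
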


\begin{proof}
Find the Lagrangian multiplier $\lambda^*$ that satisfies the fairness constraint with NE allocation $A^*_{v_1,v_2}$. Define $g,k$ as: $g_1 = (1+\lambda^*)/j_1 \text{ , } k_1 =-d_1/j_1$, and $g_2 = (1-\gamma \lambda^*)/j_2 \text{ , } k_2 =-d_2/j_2$. We rewrite the allocation: $A^*_{\omega_1,\omega_2} = 1 \text{, if } \omega_1 (1+\lambda^*) \geq \omega_2(1-\gamma \lambda^*) \text{, otherwise } 2$. Rest of the proof is same as \cite{sun2006wireless}.
\end{proof}

\bibliography{IEEEabrv,references}
\end{document}